\DeclareMathOperator*{\argmax}{arg\,max}
\begin{document}
%
\title{A	 Joint Optimization of Operational Cost and Performance Interference in Cloud Data Centers}
%
%
%
%

\author{\IEEEauthorblockN{Xibo Jin, Fa Zhang, Lin Wang, Songlin Hu, Biyu Zhou and Zhiyong Liu}\\
\IEEEauthorblockA{Institute of Computing Technology, Chinese Academy of Sciences\\
University of Chinese Academy of Sciences, Beijing, China\\
Email: \{jinxibo, zhangfa, wangling, husonglin, zhoubiyu, zyliu\}@ict.ac.cn}
}

\maketitle

\begin{abstract}
Virtual machine (VM) scheduling is an important technique to efficiently operate the computing resources in a data center. Previous work has mainly focused on consolidating VMs to improve resource utilization and thus to optimize energy consumption. However, the interference between collocated VMs is usually ignored, which can result in very worse performance degradation to the applications running in those VMs due to the contention of the shared resources. Based on this observation, we aim at designing efficient VM assignment and scheduling strategies where we consider optimizing both the operational cost of the data center and the performance degradation of running applications and then, we propose a general model which captures the inherent tradeoff between the two contradictory objectives. We present offline and online solutions for this problem by exploiting the spatial and temporal information of VMs where VM scheduling is done by jointly consider the combinations and the life-cycle overlapping of the VMs. Evaluation results show that the proposed methods can generate efficient schedules for VMs, achieving low operational cost while significantly reducing the performance degradation of applications in cloud data centers.
\end{abstract}



%
\IEEEpeerreviewmaketitle

\section{Introduction}


Cloud computing has become a promising choice for modern computing platforms and will most likely continue to be the dominant service model in the future. The foundation of cloud computing is founded by taking advantage of virtualization technologies such as VMware \cite{VMware} and Xen \cite{Xen} to encapsulate applications into virtual machines (VMs) and allow independent applications to execute on the same physical server simultaneously. Furthermore, cloud computing affords users to obtain, configure, and deploy cloud services themselves using cloud service catalogues, without requiring the assistance of IT (Infrastructure Technology) \cite{Perera12}. The feasibility of VM consolidation and on-demand resource allocation offers an opportunity for cloud operators to multiplex resources among users and thus improve the operational cost, e.g., reducing the energy consumption.

However, although it brings better utilization to the cloud system, such kind of resource multiplexing is not always beneficial. When VMs are consolidated together, the performance interference between the VMs brought by the contention of shared resources such as last-level-cache, memory bus, network and disk bandwidth can not be ignored \cite{Govindan11, Chiang11, Mars11, Roytman13, Kim13, Verboven13}. As compared to running in a dedicated server, a VM has to compete on the shared resources with other VMs that collocated with it and thus the performance will be degraded even with the same resource reservation. While previous work is focused on analysis on application-level interference in single servers, we aim to study the assignment and scheduling of VMs to physical servers, mitigating the performance interference while optimizing the operational cost. We tackle this combinatorial problem of joint optimization by leveraging the specific structures of VM collocation.

\subsection{Performance Interference inside a Cloud Data Center}

It is necessary to provide efficient management of performance interference in order to guarantee the quality of service for tenants in a cloud data center. In general, the performance interference between VMs can be affected by the following two factors.

\textbf{VM combination.} 
Recent researches have analysed the resource contention for possible VM combinations and suggested to collocate those VMs that have less competition between shared resources \cite{Govindan11, Chiang11, Mars11, Roytman13, Kim13, Verboven13}. In order to quantify the overall performance interference between collocated VMs, we evaluated the performance degradation of VMs using SPECcpu 2006 benchmark \cite{SPEC06} where we assume each application executes in a virtual machine and runs on a physical core. We define the \textit{Performance Degradation Ratio} (PDR) of a VM as the increment of running time divided by the time used for the VM to be executed in a dedicated server. The statistical results are demonstrated in Table~\ref{tb:spatial}. As can be seen from the table, the PDR of $\mathtt{429.mcf}$ when being collocated with $\mathtt{470.lbm}$ is $62.08\%$ while it is $11.90\%$ when being collocated with $\mathtt{403.gcc}$. This reveals that different VM combinations lead to variable level of performance interference. As a result, VM placement can be done in an intelligent way such that the performance interference between VMs is minimized. Another observation is that the PDRs of VMs become larger with the increase of number of collocated VMs. For example, in Table~\ref{tb:spatial}, the PDRs of $\mathtt{470.lbm}$ and $\mathtt{403.gcc}$ are $10.06\%$ and $44.53\%$ respectively, while these values increase up to $16.29\%$ and $67.55\%$ when a third VM for $\mathtt{429.mcf}$ is launched simultaneously on the same physical server.

\renewcommand{\arraystretch}{1.3}
\begin{table*}[htbp]
\centering
\caption{\label{tb:spatial}Running times (and stretches in percentile) of applications (VMs) collocated in the same physical server. (For example, the 1st row means each application runs on a dedicated server and, the 3rd row means $\mathtt{403.gcc}$ and $\mathtt{429.mcf}$ collocate in a server, et al.)}
\begin{tabular}{c|c|c|c|c|c}\hline\hline
&$\mathtt{401.bzip2}$&$\mathtt{403.gcc}$&$\mathtt{429.mcf}$&$\mathtt{453.povray}$&$\mathtt{470.lbm}$\\\hline
$\mathtt{401.bzip2}$/$\mathtt{403.gcc}$/$\mathtt{429.mcf}$/$\mathtt{453.povray}$/$\mathtt{470.lbm}$&498&265&269&186&318\\\hline
$\mathtt{401.bzip2}$ + $\mathtt{470.lbm}$&642 (28.92)&--&--&--&358 (12.58)\\
\textbf{\textit{403.gcc + 429.mcf}}&--&{\bf 299 (12.83)}&{\bf 301 (11.90)}&--&--\\
\textbf{\textit{429.mcf + 470.lbm}}&--&--&{\bf 436 (62.08)}&--&{\bf 366 (15.09)}\\
$\mathtt{403.gcc}$ + $\mathtt{453.povray}$&--&270 (1.89)&--&193 (3.76)&--\\
$\mathtt{453.povray}$ + $\mathtt{470.lbm}$&--&--&--&201 (8.06)&326 (2.52)\\
\textbf{\textit{403.gcc + 470.lbm}}&--&{\bf 383 (44.53)}&--&--&{\bf 350 (10.06)}\\\hline
\textbf{\textit{403.gcc + 429.mcf + 470.lbm}}&--&{\bf 444 (67.55)}&{\bf 487 (81.04)}&--&{\bf 407 (16.29)}\\
$\mathtt{401.bzip2}$ + $\mathtt{429.mcf}$ + $\mathtt{470.lbm}$&725 (45.58)&--&482 (79.18)&--&404 (27.04)\\
$\mathtt{401.bzip2}$ + $\mathtt{403.gcc}$ + $\mathtt{429.mcf}$ + $\mathtt{470.lbm}$&778 (56.22)&495 (86.79)&538 (100.00)&--&466 (46.54)\\\hline\hline
\end{tabular}
\end{table*}

\textbf{Life-cycle overlapping.} 
It is a challenging problem to take into account the life cycles of VMs. On the one hand, overlapping the execution of VMs can improve the resource utilization of the system and thus reduce the marginal cost.\footnote{This refers to the static cost irrespective of the load of server incurred by always-on components such as idle-energy.} On the other hand, due to performance interference, reducing the overlap of the executions of VMs can mitigate performance degradation thus shortening the completion times of VMs. This can be verified by the results shown in Table~\ref{tb:temporal}. For example, when collocated with $\mathtt{470.lbm}$, $\mathtt{429.mcf}$, and $\mathtt{401.bzip2}$, $\mathtt{403.gcc}$ receives a considerable reduction on PDR from $86.79\%$ to $10.57\%$ with the lessening of the execution overlaps. As a consequence of performance interference, the neglect of life-cycle overlapping can result in more serious problems such as resource-reservation violation brought by the stretch on the execution duration of VMs. Moreover, the performance of some VMs will become unacceptably worse when the execution is always overlapped with other mutual-interference VMs and therefore, their performance is degraded all the time by collocated VMs.

\begin{table}[!t]
\centering
\caption{\label{tb:temporal}Running times (and stretches in percentile) of applications (VMs) collocated in the same physical server with different overlap times. (For example, the 1st grid means $\mathtt{429.mcf}$ and $\mathtt{403.gcc}$ collocate in a physical server with different overlap times. I.e., $+60$ means that $\mathtt{403.gcc}$ starts after $\mathtt{429.mcf}$ has run 60 unit times.)}
\begin{tabular}{p{1.0cm}p{1.32cm}p{1.32cm}p{1.32cm}p{1.32cm}}\hline\hline 
{\bf Apps}	&	0&	+60&		+120	&	+180\\\hline
$\mathtt{429.mcf}$&	301 (11.90)&	{\bf 	294 (9.29)}&	283 (5.20)&	277 (2.97)\\
$\mathtt{403.gcc}$&	299 (12.83)&	{\bf 	292 (10.19)}&286 (7.92)&	278 (4.91)\\\hline
$\mathtt{470.lbm}$&	404 (27.04)&		377 (18.55)&	354 (11.32)&	336 (5.66)\\
$\mathtt{429.mcf}$&	482 (79.18)&		440 (63.57)&	401 (49.07)&	360 (33.83)\\
$\mathtt{401.bzip2}$& 725 (45.58)&		643 (29.12)&	578 (16.06)&	523 (5.02)\\\hline
$\mathtt{470.lbm}$&	466 (46.54)&		407 (27.99)&	357 (12.26)&	337 (5.97)\\
$\mathtt{429.mcf}$&	538 (100.0)&		476 (76.95)&	418 (55.39)&	365 (35.69)\\
\textbf{\textit{403.gcc}}&	{\bf 495 (86.79)}&		{\bf 424 (60.00)}&	{\bf 353 (33.21)}&	{\bf 293 (10.57)}\\
$\mathtt{401.bzip2}$&	778 (56.22)&		658 (32.13)&	550 (10.44)&	510 (2.41)\\\hline\hline
\end{tabular}
\end{table}

\subsection{Tradeoff between Operational Cost and Performance Interference}

In general, operational cost refers to the daily expenditure caused by the operation of a cloud computing system, including electricity cost and system maintenance expenses. Among them, the electricity cost takes a dominant proportion \cite{Hamilton09, Amokrane2013}. As a consequence, achieving energy efficiency on servers can result in significant reduction on the operational cost of a data center. For this reason, we will use the term \emph{energy consumption} to refer to operational cost. Throughout the paper, we use both terms interchangeably. There has been a large body of work focused on improving the energy efficiency of single servers, such as Dynamic Voltage Frequency Scaling (DVFS) and powering down \cite{Albers10}. Based on the two fundamental mechanisms, researches have investigated to reduce the energy consumption of a cloud system using virtualization techniques such as VM consolidation to improve hardware utilization. However, while these methods can help reach the goal of energy conservation elegantly, very little attention has been paid on the accompanying side-effect, i.e., performance interference. Moreover, to the best of our knowledge, a quantitative analysis on the tradeoff between energy consumption and performance interference is almost completely missing in the literature, which is highly desired by cloud operators.

We study the VM assignment and scheduling problem for arbitrating between energy consumption and performance interference, i.e., reducing energy consumption while maintaining low performance degradation for VMs. On the one hand, ideally, the energy consumption is minimized when a minimum number of servers is used. This can be done by consolidating VMs and then turning idle servers into some power-saving mode (sleeping or power-off). The set of active servers is managed dynamically according to the workload. Consequently, the energy consumed by underutilized servers can be saved, as well as the corresponding cost incurred by power delivery and cooling infrastructure. 

On the other hand, VM consolidation can result in undesirable performance interference between VMs because of the contention in shared resources. This performance interference can stretch the execution durations of VMs to a large extent, which may bring unacceptable performance loss to user applications (and further result in Service-Level-Agreement violation). A simple example is illustrated in Fig.~\ref{fig:example_allocation}. It can be observed that the assignment shown in the right-side figure is better than the one shown on the left in terms of two aspects: $\mathit{i)}$ the performance of most VMs such as $vm_4$ is less degraded and $\mathit{ii)}$ the real-time accommodation of $vm_6$ becomes possible. (As shown in the part of the ellipses.) This also reveals that the two factors, VM combination and life-cycle overlapping, are coupled and mutually affected. Therefore, in order to arbitrate between energy consumption and performance interference, it is necessary to provide a careful design of VM consolidation where VMs are allocated with appropriate combinations and collocated VMs are scheduled with the the most favourable life-cycle overlapping.

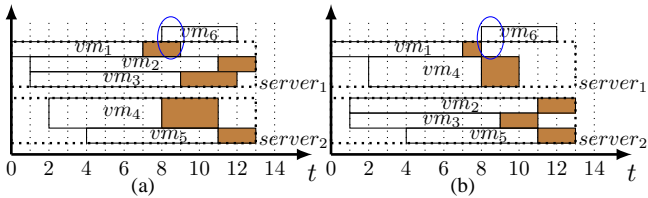
\begin{figure}
\begin{tikzpicture}
    \draw[->, >=latex,line width=1pt] (0.0,0)--(3.75+0.25,0) node[pos=1,below] {$t$};
    \draw[->, >=latex,line width=1pt] (0,0.0)--(0,2);
    \draw[dotted] (0.0,1.75)--(0.0,0.0) node[below, scale=0.8] {$0$};
    \draw[dotted] (0.25,1.75)--(0.25,0.0);
    \draw[dotted] (0.5,1.75)--(0.5,0.0) node[below, scale=0.8] {$2$};
    \draw[dotted] (0.75,1.75)--(0.75,0.0);
    \draw[dotted] (1,1.75)--(1,0.0) node[below, scale=0.8] {$4$};
    \draw[dotted] (1.25,1.75)--(1.25,0.0);
    \draw[dotted] (1.5,1.75)--(1.5,0.0) node[below, scale=0.8] {$6$};
    \draw[dotted] (1.75,1.75)--(1.75,0.0) node[below=6pt, scale=0.8] {(a)};
    \draw[dotted] (2,1.75)--(2,0.0) node[below, scale=0.8] {$8$};
    \draw[dotted] (2.25,1.75)--(2.25,0.0);
    \draw[dotted] (2.5,1.75)--(2.5,0.0) node[below, scale=0.8] {${10}$};
    \draw[dotted] (2.75,1.75)--(2.75,0.0);
    \draw[dotted] (3,1.75)--(3,0.0) node[below, scale=0.8] {${12}$};
    \draw[dotted] (3.25,1.75)--(3.25,0.0);
    \draw[dotted] (3.5,1.75)--(3.5,0) node[below, scale=0.8] {${14}$}; 
    
    \draw(0.0,1.5) rectangle (1.75,1.3) node[pos=0.5,text width=0.25em, scale=0.8] {$vm_{1}$};
    \draw[fill=brown](1.75,1.5) rectangle (2.25,1.3);
    
    \draw(2.0,1.7) rectangle (3,1.5) node[pos=0.5, scale=0.8] {$vm_{6}$};
    
    \draw[color=blue] (2.12,1.55) ellipse(5pt and 8pt);
    
    \draw(0.25,1.3) rectangle (2.75,1.1) node[pos=0.5,text width=0.25em, scale=0.8]{$vm_{2}$} ;
    \draw[fill=brown](2.75,1.3) rectangle (3.25,1.1);
    
    \draw(0.25,1.1) rectangle (2.25,0.9) node[pos=0.5,text width=0.25em, scale=0.8]{$vm_{3}$};
    \draw[fill=brown](2.25,1.1) rectangle (3,0.9);
        
    \draw(0.5,0.75) rectangle (2,0.35) node[pos=0.5,text width=0.25em, scale=0.8]{$vm_{4}$};
    \draw[fill=brown] (2,0.75) rectangle (2.75,0.35);
    
    \draw(1,0.35) rectangle (2.75,0.15) node[pos=0.5,text width=0.25em, scale=0.8]{$vm_{5}$};
    \draw[fill=brown] (2.75,0.35) rectangle (3.25,0.15);   
    
    \draw[dotted,line width=0.9pt](0.0,1.5) rectangle (3.25,0.9) node[right=0pt,pos=0.98,scale=0.8] {$server_1$};    
    \draw[dotted,line width=0.9pt](0.0,0.75) rectangle (3.25,0.15) node[right=0pt,pos=0.98,scale=0.8] {$server_2$}; 
    
    \draw[->, >=latex,line width=1pt] (4.0+0.25,0)--(7.75+0.25+0.25,0) node[pos=1,below] {$t$};
    \draw[->, >=latex,line width=1pt] (4+0.25,0.0)--(4+0.25,2);
    \draw[dotted] (4.0+0.25,1.75)--(4.0+0.25,0.0) node[below, scale=0.8] {$0$};
    \draw[dotted] (4.25+0.25,1.75)--(4.25+0.25,0.0);
    \draw[dotted] (4.5+0.25,1.75)--(4.5+0.25,0.0) node[below, scale=0.8] {$2$};
    \draw[dotted] (4.75+0.25,1.75)--(4.75+0.25,0.0);
    \draw[dotted] (5+0.25,1.75)--(5+0.25,0.0) node[below, scale=0.8] {$4$};
    \draw[dotted] (5.25+0.25,1.75)--(5.25+0.25,0.0);
    \draw[dotted] (5.5+0.25,1.75)--(5.5+0.25,0.0) node[below, scale=0.8] {$6$};
    \draw[dotted] (5.75+0.25,1.75)--(5.75+0.25,0.0) node[below=6pt, scale=0.8] {(b)};
    \draw[dotted] (6+0.25,1.75)--(6+0.25,0.0) node[below, scale=0.8] {$8$};
    \draw[dotted] (6.25+0.25,1.75)--(6.25+0.25,0.0);
    \draw[dotted] (6.5+0.25,1.75)--(6.5+0.25,0.0) node[below, scale=0.8] {${10}$};
    \draw[dotted] (6.75+0.25,1.75)--(6.75+0.25,0.0);
    \draw[dotted] (7+0.25,1.75)--(7+0.25,0.0) node[below, scale=0.8] {${12}$};
    \draw[dotted] (7.25+0.25,1.75)--(7.25+0.25,0.0);
    \draw[dotted] (7.5+0.25,1.75)--(7.5+0.25,0.0) node[below, scale=0.8] {${14}$};  
    
    \draw(4.0+0.25,1.5) rectangle (5.75+0.25,1.3) node[pos=0.5,text width=0.25em, scale=0.8] {$vm_{1}$};
    \draw[fill=brown](5.75+0.25,1.5) rectangle (6.0+0.25,1.3);
    
    \draw(4.5+0.25,1.3) rectangle (6+0.25,0.9) node[pos=0.5,text width=0.25em, scale=0.8]{$vm_{4}$};
    \draw[fill=brown] (6+0.25,1.3) rectangle (6.5+0.25,0.9);
    
    \draw(6.0+0.25,1.7) rectangle (7+0.25,1.5) node[pos=0.5, scale=0.8] {$vm_{6}$};
    
    \draw[color=blue] (6.0+0.25+0.12,1.55) ellipse(5pt and 8pt);
    
    \draw(4.25+0.25,0.75) rectangle (6.75+0.25,0.55) node[pos=0.5,text width=0.25em, scale=0.8]{$vm_{2}$} ;
    \draw[fill=brown](6.75+0.25,0.75) rectangle (7.25+0.25,0.55);
    \draw(4.25+0.25,0.55) rectangle (6.25+0.25,0.35) node[pos=0.5,text width=0.25em, scale=0.8]{$vm_{3}$};
    \draw[fill=brown](6.25+0.25,0.55) rectangle (6.75+0.25,0.35);
        
    \draw(5+0.25,0.35) rectangle (6.75+0.25,0.15) node[pos=0.5,text width=0.25em, scale=0.8]{$vm_{5}$};
    \draw[fill=brown] (6.75+0.25,0.35) rectangle (7.25+0.25,0.15);
    
    \draw[dotted,line width=0.9pt](4.0+0.25,1.5) rectangle (4.0+0.25+3.25,0.9) node[right=0pt,pos=0.98,scale=0.8] {$server_1$};    
    \draw[dotted,line width=0.9pt](4.0+0.25,0.75) rectangle (4.0+0.25+3.25,0.15) node[right=0pt,pos=0.98,scale=0.8] {$server_2$}; 
 
\end{tikzpicture}
\caption{\label{fig:example_allocation}Two ways of VM allocation. Assume the two servers have the same resource capacity of \{$\bm{1}$\} unit and each VM from $\{vm_1,vm_2,vm_3,vm_5,vm_6\}$ demands \{$\bm{\frac{1}{3}}$\} unit while $vm_4$ requires \{$\bm{\frac{2}{3}}$\} unit. The rectangles in color represent the stretch of execution time due to resource contention, which represent the same mean in the following figures. (a): An inappropriate scheduling, (b): A better scheduling.}
\end{figure}

\subsection{Overview of the Paper}

In this work, we seek to find out efficient solutions for reducing the energy consumption while minimizing the performance interference among VMs. Our main contributions are summarized in the following three aspects:
\begin{enumerate}
\item We characterize the energy consumption and the performance interference in a unified model and  formally formulate the challenge of VM assignment and scheduling into an optimization problem. We also prove the NP-Completeness of the problem;
\item We propose efficient algorithms for offline VM assignment and scheduling, assuming all information is known a priori;
\item We extend the offline algorithms to the case with dynamic VM arrival. Using information such as resource reservation, these algorithms can further be improved. We also provide a distributed implementation of the algorithms for large-scale data centers; and
\item We evaluate the efficiency of the proposed algorithms through comprehensive simulations, showing that the proposed solution can achieve desirable arbitration between energy consumption and performance interference.
\end{enumerate}

The rest of this paper is organized as follows. In Section~\ref{sec:related}, we summarize the related works relevant to ours. Section~\ref{sec:model} shows the modelling of the problem. Section~\ref{sec:algorithm} provides our algorithms for VM assignment and scheduling where both offline and online cases are considered, while distributed solution is also provided. Section~\ref{sec:exp} validates the performance of the algorithms by extensive simulations. We finally conclude the paper in Section~\ref{sec:conclusion}.

\section{Relate Work}
\label{sec:related}

With the cloud computing being used more and more widely, researchers have conducted studies on executing traditional applications (e.g., HPC and scientific computing) in cloud environments. This section summarizes the research efforts on VM assignment and scheduling that is relevant to our work in terms of operational energy management and application performance interference in data centers.

\textbf{Energy consumption management.} 
It is known that the most efficient way to reduce the energy consumption is consolidating the applications (VMs) into a set of active servers, such that the utilization of the data center is kept at a high level. An early research \cite{Nathuji07} extended virtualization solutions to support rich and effective policies for active power management which had not been done before. They integrated ``hard" and ``soft" power states to provide high power savings, and showed that substantial benefits could been derived from coordination of online methods for server consolidation with their proposed management techniques. Kusic \emph{et al.} \cite{Kusic09} considered the problem of consolidating services onto a smaller number of computing resources. They implemented a dynamic resource provisioning framework for virtualized server environments, which was tackled as one of sequential optimization and solved using a lookahead control scheme. Beloglazov \emph{et al.} \cite{Beloglazov12} investigated scheduling algorithms that consolidate VMs onto the minimum number of servers. They proposed a policy as known as Modified Best Fit Decreasing (MBFD), for energy-efficient management of cloud computing environments. There is another representative work \cite{Lin11}, in which the authors investigated the energy-saving problem by dynamically ``right-sizing" the data center in both offline and online cases. Liu \emph{et al.} \cite{Liu13} studied the problem of arbitrating the power-performance tradeoff in clouds. They provided a probabilistic framework where online decisions are made on request admission control, routing, and VM allocation.

Therefore, these works are totally different from our work as they only focus on optimizing energy consumption while guaranteeing some other metrics, such as throughput. In our approach, the performance interference is an important objective for scheduling.

\textbf{Performance interference optimization.} 
Several works  \cite{Govindan11, Chiang11, Mars11, Roytman13, Kim13, Verboven13, Xu14} have take into account the performance interference when exploiting the virtual machine consolidation to improve resource utilization. Govindan \emph{et al.} \cite{Govindan11} presented a technique for predicting performance interference due to processor cache sharing. They showed that their technique can be used to achieve the most efficient consolidation as the prediction of the performance degradation for any possible application placement only use a linear number of measurements. Chiang \emph{et al.} \cite{Chiang11} considered the problem of interference-aware scheduling for data-intensive applications in virtualized environment. They presented a task and resource allocation control framework, which can mitigate the interference effects from concurrent data-intensive applications and improve the overall system performance. Mars \emph{et al.} \cite{Mars11} presented a characterization methodology, named ``Bubble-Up", which enables the accurate prediction of the performance degradation that results from contention for shared resources in memory subsystem. They showed their methodology could predict the performance interference between collocated applications with an accuracy within $1\%$ to $2\%$ of the actual performance degradation. Roytman \emph{et al.} \cite{Roytman13} proposed a system that consolidates virtual machines to minimize the unused resources, and guarantees that the performance degradation is within a tunable bound. Their system employed a method for suitable VM combinations which was proved to perform closely to the optimal, and the system included another technique that maximizes performance while not leaving any resource unused. Kim \emph{et al.} \cite{Kim13} suggested a performance model that considers interferences in the shared last-level cache and memory bus. They claimed that the model could be used to estimate the performance degradation among applications. Based on the interference model they also presented a virtual machine consolidation method. Verboven \emph{et al.} \cite{Verboven13} addressed the performance degradation prediction models and proposed a novel approach using both the classification and regression capabilities of support vector machines. A latest survey \cite{Xu14} gave the state of the art of some of these solutions for managing the performance overhead in different cloud scenarios.

Compared with these previous works considering performance interference optimization, our model provides a unified characterization of both the energy consumption and the performance interference and our solution for VM assignment and scheduling considers both VM combination and life-cycle overlapping. We explore the tradeoff between the performance degradation overhead of VMs and resource provision of cloud data centers on a high level, which is raised as an open research issue in \cite{Esch14}. Moreover, our work can be regarded as a complement to previous works in terms of that the solutions provided by them can be integrated into our optimization framework to reduce the overall cost of a cloud system.

\section{Model and Problem Description}
\label{sec:model}
In this section we describe the model and formulate an optimization problem of VMs scheduling that aims at arbitrating between energy consumption cost and performance degradation penalty.

\subsection{Resource Allocation and Energy Cost}
We model cloud data center as an undirected graph and denote it by $G=(\mathcal{M},\mathcal{L})$, where $\mathcal{M}$ $({\vert}\mathcal{M}{\vert}=M)$ is the set of physical servers and $\mathcal{L}$ is the set of physical links between servers. Each server $server_i{\in}\mathcal{M}$ is associated with $s$ type of resources, e.g., CPU, memory, and storage space et al. The resources of $server_i$ are available in $\bm{R_i}=\{C_{ik}\}$ $(k=1,2,...,s)$ units, respectively.

Recent studies \cite{Kusic09, Fan07} have shown that the power consumption $P[u(t)]$ and the CPU utilization $u(t)$ of a server has a linear relationship
\begin{equation}
P[u(t)]=P_{idle}+(P_{peak}-P_{idle})*u(t),\label{eq:powerCalc}
\end{equation}
The $P_{idle}$ and $P_{peak}$ represent the power consumption by a server at the CPU utilization of 0\% and 100\%, respectively. Obviously, the energy consumption of a server is its power integrated over duration time, i.e., $\int_tP[u(t)]\,\mathrm{d}t$.

\subsection{Virtual Machine Request and Interference}
Cloud computing provides users with scalable, elastic and on-demand resources. Users submit their VM requests to cloud data center scheduler. Each VM request $vm_j$ is specified by an instance vector $\bm{I}_j=[a_j, p_j, \bm{R}_j]$, where $a_j$ is the arrival time, and $p_j$ is the work of processing time when $vm_j$ runs alone. Note that the VMs should start at the arrival time. The capacity vector $\bm{R}_j=\{R_{jk}\}$ $(k=1,2,...,s)$ represents the resources that $vm_j$ requires for processing its work. For example, an instance type of VM in Amazon EC2 \cite{Amazon} specifics its resource capacity \{CPU:2 vcpu/8 EC2 units, memory:7GB, storage:1680GB\}.

For each pair of $vm_j$ and $vm_{j^{'}}$, it defines the degradation factor $d_{jj^{'}}{\geq}0$ as the percentage increase in the execution time of $vm_j$ when they run concurrently on the same server. It is assumed that the performance degradation factor $d_{jj^{'}}$ between each pair of VMs, when allocating together, is known from existing methods \cite{Govindan11, Mars11, Roytman13, Verboven13, Koh07}, and we focus on the virtual machine scheduling given these factors. Note that $d_{jj^{'}}$ may not equal to $d_{j^{'}j}$ as two VMs will experience different degradation suffering from each other. It is also noted that adding VMs to the server to concurrently run with exist VMs will not reduce the degradation of previous VMs \cite{Roytman13}. It defines the degradation factor $d_{j\mathcal{J}}$ of $vm_j$ when it concurrently runs with a set $\mathcal{J}$ of VMs. Without loss of generality, it defines the $d_{j\mathcal{J}}$ as
\begin{equation}
d_{j\mathcal{J}}=\Pi_{j^{'}\in{\mathcal{J}}}(1+d_{jj^{'}})-1.
\end{equation}
This model is used to instead of $\Sigma_{j^{'}\in{\mathcal{J}}}d_{jj^{'}}$ as it is reasonable to give more severe penalty for performance degradation additive. Then the degradation factor is used to transform the processing time work. I.e., when $vm_j$ concurrently runs with a set $\mathcal{J}$ of VMs for duration time $\widetilde{p}_j$, it finishes $\frac{1}{1+d_{j\mathcal{J}}}\widetilde{p}_j$ work of processing time. To illustrate the behaviour of this interference model between VMs consider the example of Fig.~\ref{fig:example_degradationFactor}. During the first 2 unit time, $vm_1$ is collocated with $vm_2$. Each of them processes 1 unit of work because $\frac{2}{1+d_{12}}=\frac{2}{1+d_{21}}=1$. In the next 2 unit time, as the $vm_3$ joins in, all of them process 0.5 unit of work because $\frac{2}{(1+d_{12})(1+d_{13})}=\frac{2}{(1+d_{21})(1+d_{23})}=\frac{2}{(1+d_{31})(1+d_{32})}=0.5$. From the time 4 to 6, it is the same as time 2 to 4, they process 0.5 unit of work. $vm_2$ leaves the server at time 6 when it finishes its processing work. From time 6 to 8, $vm_1$ and $vm_3$ process 1 unit of work. $vm_3$ leaves at time 8 as it finishes its processing work. At last, $vm_1$ will process 1 more unit time to finish its work if it runs alone or the server is assigned VMs that do not cause performance degradation to $vm_1$.

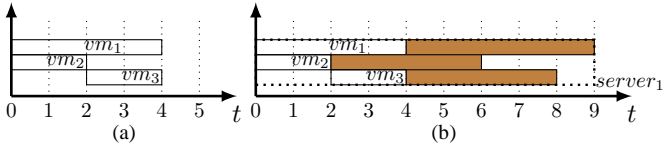
\begin{figure}
\begin{tikzpicture}
    \draw[->, >=latex,line width=1pt] (0.0,0)--(3.0,0) node[pos=1,below] {$t$};
    \draw[->, >=latex,line width=1pt] (0,0.0)--(0,1.25);
    \draw[dotted] (0.0,1.00)--(0.0,0.0) node[below, scale=0.8] {$0$};
    \draw[dotted] (0.5,1.00)--(0.5,0.0) node[below, scale=0.8] {$1$};
    \draw[dotted] (1,1.00)--(1,0.0) node[below, scale=0.8] {$2$};
    \draw[dotted] (1.5,1.00)--(1.5,0.0) node[below, scale=0.8] {$3$};
    \draw (1.5,0.0) node[below=8pt, scale=0.8] {(a)};
    \draw[dotted] (2,1.00)--(2,0.0) node[below, scale=0.8] {$4$};
    \draw[dotted] (2.5,1.00)--(2.5,0.0) node[below, scale=0.8] {${5}$};
        
    \draw(0.0,0.75) rectangle (2,0.55) node[pos=0.5,text width=0.25em, scale=0.8]{$vm_{1}$};

    \draw(0,0.55) rectangle (1.0,0.35) node[pos=0.5,text width=0.25em, scale=0.8]{$vm_{2}$};
    
    \draw(1.0,0.35) rectangle (2.0,0.15) node[pos=0.5,text width=0.25em, scale=0.8]{$vm_{3}$};
    
    \draw[->, >=latex,line width=1pt] (3.25,0)--(8.0+0.25,0) node[pos=1,below] {$t$};
    \draw[->, >=latex,line width=1pt] (3.25,0.0)--(3.25,1.25);
    \draw[dotted] (3.25,1.00)--(3.25,0.0) node[below, scale=0.8] {$0$};

    \draw[dotted] (3.75,1.00)--(3.75,0.0) node[below, scale=0.8] {$1$};

    \draw[dotted] (4.25,1.00)--(4.25,0.0) node[below, scale=0.8] {$2$};

    \draw[dotted] (4.75,1.00)--(4.75,0.0) node[below, scale=0.8] {$3$};

    \draw[dotted] (5.25,1.00)--(5.25,0.0) node[below, scale=0.8] {$4$};

    \draw[dotted] (5.75,1.00)--(5.75,0.0) node[below, scale=0.8] {$5$};

    \draw (5.75,0.0) node[below=8pt, scale=0.8] {(b)};

    \draw[dotted] (6.25,1.00)--(6.25,0.0) node[below, scale=0.8] {$6$};

    \draw[dotted] (6.75,1.00)--(6.75,0.0) node[below, scale=0.8] {${7}$};

    \draw[dotted] (7.25,1.00)--(7.25,0.0) node[below, scale=0.8] {${8}$};

    \draw[dotted] (7.75,1.00)--(7.75,0.0) node[below, scale=0.8] {${9}$};

    \draw(3.25,0.75) rectangle (5.25,0.55) node[pos=0.5,text width=0.25em, scale=0.8]{$vm_{1}$} ;
    \draw[fill=brown](5.25,0.75) rectangle (7.75,0.55);
    
    \draw(3.25,0.55) rectangle (4.25,0.35) node[pos=0.5,text width=0.25em, scale=0.8]{$vm_{2}$};
    \draw[fill=brown](4.25,0.55) rectangle (6.25,0.35);
        
    \draw(4.25,0.35) rectangle (5.25,0.15) node[pos=0.5,text width=0.25em, scale=0.8]{$vm_{3}$};
    \draw[fill=brown] (5.25,0.35) rectangle (7.25,0.15);
 
    \draw[dotted, line width=0.9pt](3.25,0.75) rectangle (7.75,0.15) node[right=0pt,pos=0.98,scale=0.8] {$server_1$}; 
\end{tikzpicture}
\caption{\label{fig:example_degradationFactor}The execution of VM Instances collocation in a server. Assume the server has resource capacity of \{$\bm{1}$\} unit. Each VM from \{$vm_1,vm_2,vm_3$\} demands \{$\bm{\frac{1}{3}}$\} unit and the processing times for the three VMs are given by $p_1=4$, $p_2=2$, and $p_3=2$. The performance degradation factors among them are all $\{d_{..}\}=1$. (a): VM Instance configuration. (b): Running in a server.}
\end{figure}

\subsection{Scheduling Problem Description}

There are two issues need to be concerned about the allocation of virtual machines. Cloud infrastructure providers offer some specific kinds of VMs, which tend to reserve resources, such as CPU, memory and storage space. They pursue to reduce the operational cost, i.e., minimize energy consumption or capacity cost. On the other side, cloud users seek to reduce the running time of their requests. In this way, they can save bills for the rented resources.

In our scheduling, time is divided into discrete periods, $t=1,2,...,T$. For example, the interval $\tau$ can be one or five minute(s). The binary decision variable $x_{ij}(t)$ indicates whether $vm_j$ is allocated to $server_i$ at the time slot $t$. It defines $c_i(t)$ as the energy consumption cost of $server_i$ running during time slot $t$, i.e., $c_i(t)=P[u(t)]\tau$. Thus, the total operational cost during time slot $t$ is the sum of all running servers, which is calculated as
\begin{equation}
C_i(t)=\sum_{i}c_i(t),
\end{equation}

Let $Q(t)$ denote the set of VMs that run at time slot $t$. Define $D(t)$ as the set of VMs that complete their execution and leave at the time slot $t$. Hence, the execution time $t_j$ of $vm_j$ $(j\in{D(t)})$ is
\begin{equation}
t_j=t-a_j, t_j{\geq}p_j.
\end{equation}

The performance degradation penalty is model by a convex function $f(\cdot)$. One natural model for it is $f[(\frac{t_j-p_j}{p_j})^+]=\alpha^{(\frac{t_j-p_j}{p_j})^+}-1$ $([x]^+{\equiv}max\{0, x\})$, which penalizes the delay cost from the processing time $p_j$. Therefore, the VMs scheduling problem is defined as the following optimization:
\begin{equation}
min \qquad \sum\limits_{t=1}^T\sum\limits_{i=1}^{M}c_i(t)+\beta \sum\limits_{t=1}^{T}\sum_{j{\in}D(t)}f[(\frac{t_j-p_j}{p_j})^+]
\end{equation}
\begin{eqnarray}
s.t.\quad \sum_{j{\in}Q(t)}x_{ij}(t)R_{jk}{\leq}R_{ik} \quad \forall{i},\forall{t},\forall{k},\\
\sum_{i=1}^Mx_{ij}(t)=1 \quad \forall{j},\forall{t},\\
x_{ij}(t){\leq}x_{ij}(t+1) \quad \forall{t},j{\notin}D(t),\\
x_{ij}(t){\in}\{0,1\} \quad \forall{i},\forall{j},\forall{t}.
\end{eqnarray}
The objective function $(5)$ minimizes the total operational server costs and performance degradation penalties, and $\beta>0$ is some constant and represents the relative importance between two objectives. Constraint $(6)$ ensures that the aggregated resource demand of multiple VMs does not exceed a server's capacity for all resource types and at all time slot. Constraint $(7)$ relates to that each VM is allocated to one of the servers at any point in time. Constraint $(8)$ refers to that if a VM has assigned on a server it will not be assigned to other servers. Constraint $(9)$ follows that $x_{ij}(t)$ is set to one if $vm_j$ is allocated to $server_i$ at time slot $t$. We first give the computational complexity of this problem as following:
\newtheorem{theorem}{Theorem}
\begin{theorem}
Find an optimal VMs schedule for arbitrating between operational cost and performance degradation penalty is NP-Complete. \label{theorem:hardness}
\end{theorem}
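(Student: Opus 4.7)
The plan is to establish NP-completeness in the standard two-step way: first show that the decision version lies in NP, then reduce a classical NP-complete problem to it. Since the scheduling formulation (5)--(9) already contains a capacity-constrained placement subproblem, Bin Packing is the natural source of hardness; I would not attempt to exploit the degradation term at all for the reduction.

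For membership in NP, I would take the assignment table $\{x_{ij}(t)\}$ as the certificate. Its size is polynomial in the number of servers, VMs, and time slots. Checking constraints (6)--(9) is pure summation, and evaluating the two sums in the objective (5) reduces to computing $c_i(t)$ via (1) on each active server and the penalty $f(\cdot)$ for each completed VM. Hence the decision version (``is there a feasible schedule of total cost at most $K$?'') is in NP.

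For NP-hardness I would reduce from Bin Packing: given items of sizes $s_1,\dots,s_n$, a bin capacity $C$, and an integer $B$, decide whether the items fit into $B$ bins. I build a scheduling instance with one resource type, horizon $T=1$, $n$ VMs all arriving at $a_j=0$ with work $p_j=1$ and demand $R_{j1}=s_j$, $M\geq n$ servers each of capacity $R_{i1}=C$, and all pairwise degradation factors $d_{jj'}=0$. With $d_{jj'}=0$ the product formula (2) gives $d_{j\mathcal{J}}=0$, so every VM completes in one slot with $t_j=p_j$ and the entire penalty term in (5) vanishes because $f(0)=0$. Using (1), the remaining cost reduces to
\[
P_{\mathit{idle}}\,|\mathcal{A}| \;+\; (P_{\mathit{peak}}-P_{\mathit{idle}})\,\frac{\sum_j s_j}{C},
\]
where $\mathcal{A}$ is the set of servers that host at least one VM. Only the first summand depends on the assignment, so setting $K=P_{\mathit{idle}}\,B+(P_{\mathit{peak}}-P_{\mathit{idle}})\sum_j s_j/C$ makes a ``yes'' answer to the scheduling decision equivalent to a ``yes'' answer to Bin Packing. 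The construction is obviously polynomial in the binary-encoded input.

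The main obstacle is ruling out the extra degrees of freedom introduced by the richer scheduling model. I would have to argue that the time dimension genuinely collapses — which it does because $T=1$ and constraint (8) forbids mid-execution migration — and that the degradation penalty cannot be abused to ``absorb'' placement cost, which is handled by choosing $d_{jj'}=0$ uniformly so the penalty term is identically zero on every feasible schedule. A secondary technical point worth stating explicitly in the reduction is the assumption $P_{\mathit{idle}}>0$, without which the number of active servers would drop out of the objective; this is consistent with the linear model (1) and with the empirical power models cited in the paper.
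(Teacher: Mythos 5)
Your proof is correct, but it takes a genuinely different route from the paper. The paper reduces from 3-Dimensional Matching: it builds $3q$ unit-work VMs, each demanding $\{\frac{1}{3}\}$ of a unit-capacity server, and encodes the matching structure entirely in the degradation factors ($d_{jj'}=0$ for elements appearing together in a triple, $d_{jj'}=1$ otherwise), so that a cost-plus-penalty bound of $q$ is met exactly when the VMs partition into $q$ interference-free triples. You instead reduce from Bin Packing, set every $d_{jj'}=0$ so the penalty term is identically zero, and let the hardness come purely from the capacity constraint (6) together with the per-active-server static cost $P_{idle}$. Both are legitimate: hardness of a special case implies hardness of the general problem. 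The trade-offs are worth noting. Your reduction is simpler and shows the problem is intractable even with no interference at all, i.e., the energy/consolidation side alone is hard; but it leans on binary-encoded item sizes (so, as stated via the Partition-style argument, it certifies ordinary rather than strong NP-hardness unless you invoke the strong NP-completeness of Bin Packing via 3-Partition) and, as you correctly flag, on $P_{idle}>0$ and on the convention that only servers hosting at least one VM contribute to the objective --- a convention the paper's own proof also implicitly uses. The paper's reduction is complementary: all VMs have identical size and work, so it locates the hardness in the interference combinatorics rather than in number packing, and the unary-sized construction supports the paper's claim of a pseudo-polynomial (strong) reduction. One small point you could make explicit: with $T=1$, $p_j=1$, and $a_j=0$, the completion time satisfies $t_j=p_j$, so $f(0)=\alpha^0-1=0$ and the penalty vanishes for every feasible schedule, not just the optimal one; you state this, and it is the key reason the second objective term cannot interfere with the equivalence.
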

\begin{proof}
First, we transform the optimization problem to an associated decision problem: given the instance vectors of VMs, the performance degradation factors, and a bound on the sum of energy consumption and performance degradation penalty, is there a schedule such that the bound on sum of cost and penalty is satisfied? Clearly, it belongs to NP, since we can computing and verify in polynomial time that a proposed schedule satisfies the given bound on the sum of operational cost and performance degradation penalty. We next prove that finding an optimal VMs schedule for arbitrating between energy consumption and performance degradation penalty is NP-Complete via the reduction to the 3-Dimensional Matching problem \cite{Garey79, Leung04}.

Consider an instance of 3-Dimensional Matching: Let $A=\{a_1,a_2,..,a_q\}$, $B=\{b_1,b_2,...,b_q\}$, and $C=\{c_1,c_2,...,c_q\}$ be three disjoint sets of $q$ elements each. Let $Z=\{z_1,z_2,...,z_{l^{'}},...,z_{l}\}$ be a set of triples such that each $z_i$ consists of one element from $A$, one element from $B$, and one element from $C$. Is there a subset $Z^{'}{\subseteq}Z$ such that every element in $A$, $B$, and $C$ appears in exactly one triple in $Z^{'}$? We construct an instance of VM scheduling problem as follows. Let there be $3q$ VMs and $M_q({\geq}q)$ servers. The VMs correspond to the elements in $A$, $B$ and $C$. For each $1{\leq}i{\leq}M_q$, $server_i$ has resource vector $\bm{R_i}=\{R_{ik}\}=\{\bm{1}\}$ $(k=1,2,...,s)$. For each $1{\leq}j{\leq}3q$, $vm_j$ has instance vector $[0,1,\{\bm{\frac{1}{3}}\}]$. VMs have no interference with each other in the triples $z_{l^{'}}{\in}Z^{'}$; otherwise, they have performance degradation factor 1 between each other. The sum of cost and penalty is $q$. The energy consumption cost of a server is $1$ per unit of time slot when it runs at full utilization (Suppose it be $0.5$ at idle). The sum of cost and penalty is equal to $q$ if and only if the $3q$ VMs are scheduled on $q$ servers and do not cause performance degradation. I.e., $\bigcup\limits_{server_i}z_{l^{'}}=A{\cup}B{\cup}C$. Thus, there is an optimal VMs schedule if and only if there is a 3-Dimensional matching. It is clear that the above reduction is a pseudo-polynomial reduction. So we can conclude that the problem is NP-Complete by this pseudo-polynomial time reduction to the 3-Dimensional Matching problem which has been proved to be NP-Complete.
\end{proof}

\section{Virtual Machine Scheduling Design}
\label{sec:algorithm}

As it is a NP-hard combinatorial optimization problem and there is no computationally-efficient solution, we exploit the unique problem structure of VM scheduling in cloud data centers to develop the solutions. We first study a static problem (offline). After then, we develop the solution to the dynamic version of the problem (online).

\subsection{Offline scheduling Design}
In this condition, the informations of VMs that will be scheduled are known at the outset. We propose offline algorithms for virtual machine scheduling and analyse the performance.

\textbf{Bin Packing Variant Algorithm (BPV).} From the perspective of single energy consumption criterion optimization, various packing algorithms are become the reserve choices. It is an obvious advantage to reduce the energy consumption when decreasing the number of active servers. So an algorithm derived from First-Fit bin packing is considered. The algorithm keeps the VMs in a list sorted in increasing order of the arrival time. Each VM is allocated to the first possible accommodated server according to the list order. It invokes a server when capacity violation happens. The difference from First Fit algorithm is VMs will depart from the servers when they finish their work and the relevant resources will be recovered.

\textbf{Minimum Increasing Cost Algorithm (MIC).} Another natural algorithm is greedy differential of increasing costs of energy consumption and performance degradation penalty. The VMs are also kept in the increasing order of their arrival time. It would assign the next VM to the server that minimizes the increment of total cost. There would be two choices for the allocation of next VM. The increment of total cost is the sum of energy consumption and performance degradation penalty when the VM is allocated to an active server running with exist VMs. The other choice is a currently unused server with paying for more static energy consumption that supposing the VM process alone.
\newtheorem{lemma}{Lemma}
\begin{theorem}
Let $I_{max}$ denote the maximum number of VMs that can be simultaneously accommodated by a server. The approximation ratio of MIC algorithm is $I_{max}$. \label{theorem:approximation}
\end{theorem}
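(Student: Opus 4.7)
The plan is to prove $MIC\leq I_{max}\cdot OPT$ by pairing a per-VM upper bound on MIC's incremental cost against an amortized lower bound on OPT, both expressed in terms of the ``run alone'' energy $E_j:=[P_{idle}+(P_{peak}-P_{idle})u_j]\,p_j$ of each VM, where $u_j$ denotes the CPU utilization that $vm_j$ contributes while it runs.

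First I would exploit the greedy rule. Because opening a fresh server is always one of MIC's candidate placements and MIC picks the minimum-increment option, the increment $\Delta_j^{MIC}$ charged when $vm_j$ is scheduled is at most the cost of running $vm_j$ by itself on a previously idle server. On such a placement $vm_j$ finishes exactly at time $a_j+p_j$, no interference penalty is accrued, and no previously active server is perturbed; the increment is therefore exactly $E_j$. Summing over all VMs would yield
\begin{equation*}
MIC \ \leq\ \sum_j E_j\ =\ \sum_j \bigl[P_{idle}+(P_{peak}-P_{idle})u_j\bigr]\,p_j.
\end{equation*}

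Next I would lower-bound OPT, keeping only its energy component (the penalty is non-negative). By constraint $(6)$, at every time $t$ OPT keeps at least $|Q(t)|/I_{max}$ servers active, each contributing at least $P_{idle}$ of idle power, while the utilization-dependent term sums to $(P_{peak}-P_{idle})\sum_{j\in Q(t)}u_j$. Integrating in $t$ and using $\int |Q(t)|\,dt=\sum_j t_j^{OPT}\geq\sum_j p_j$ (each VM must spend at least $p_j$ units of real time running, even under interference) gives
\begin{equation*}
OPT\ \geq\ \frac{P_{idle}}{I_{max}}\sum_j p_j\ +\ (P_{peak}-P_{idle})\sum_j u_j\,p_j.
\end{equation*}
Multiplying by $I_{max}\geq 1$ reproduces, term by term, the earlier upper bound on $MIC$, so the claimed ratio $MIC\leq I_{max}\cdot OPT$ follows.

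The hard part will not be the algebra but justifying the two ingredient bounds. The MIC upper bound relies on the fresh-server option always being available, i.e., the data center having sufficient spare capacity; this is consistent with the model in which $M$ is chosen large enough to accommodate all pending VMs. The OPT lower bound relies on $t_j^{OPT}\geq p_j$ even under interference, which holds because completed work cannot exceed real elapsed time, together with the idle-power amortization over at most $I_{max}$ VMs per server, which is precisely what $I_{max}$ encodes. A final subtlety is that $\Delta_j^{MIC}$ in general also folds in the extra energy and penalty MIC inflicts on previously placed VMs; the fresh-server comparison sidesteps this cleanly because that alternative inflicts no such extra cost on any other server.
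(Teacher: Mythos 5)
Your proposal is correct and takes essentially the same route as the paper's proof: it upper-bounds each greedy increment of MIC by the dedicated-server cost $[P_{idle}+(P_{peak}-P_{idle})u_j]\,p_j$ (available because opening a fresh server is always a candidate), and lower-bounds OPT by charging every VM an amortized $P_{idle}/I_{max}$ of static power plus its proportional dynamic power over at least $p_j$ time units, so that multiplying by $I_{max}$ closes the gap term by term. The only (cosmetic) difference is that you obtain the $1/I_{max}$ factor by counting at least $|Q(t)|/I_{max}$ active servers per time slot, whereas the paper decomposes each server's power across its resident VMs and invokes $I_{max}R_{j1}\geq C_{i1}$; your form of the amortization is, if anything, slightly more robust.
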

\begin{proof}
Note that for the minimization problem, an algorithm achieves a $\delta$-approximation factor if for all instances it returns a solution at most $\delta$ times the optimal value.

We decompose the power of a $server_i$ at time $t$ into the VMs according to the proportion of their CPU resources. For example, at time $t$, there are $n_t$ VMs with CPU resource of $R_{j1}$ $(j=1,...,n_t)$ (here we set the resource type $1$ as the CPU resource) in $server_i$ which has CPU resource $C_{i1}$. Then $vm_j$ consumes $\frac{R_{j1}P_{idle}}{\sum_{j=1}^{n_t}R_{j1}}+\frac{R_{j1}(P_{peak}-P_{idle})}{C_{i1}}$ power respectively, where the former part corresponds to the proportion of the static power and the later part is the dynamic power this VM consumes. Without loss of generality, We consider the VM $vm_j$. The power of this VM during its execution time is at least $\frac{R_{j1}P_{idle}}{C_{i1}}+\frac{R_{j1}(P_{peak}-P_{idle})}{C_{i1}}$ because $\sum_{j=1}^{n_t}R_{j1}{\leq}C_{i1}$. Again, we decompose the total cost of energy consumption and performance degradation penalty to the cost of each VM when it is allocated. According to the MIC algorithm, the cost of inserted $vm_j$ is no more than $[P_{idle}+\frac{R_{j1}(P_{peak}-P_{idle})}{C_{j1}}]*p_j$. So the approximation ratio is
\begin{equation*}
\begin{split}
\frac{cost(MIC)}{cost(OPT)}&{{\leq}_{\textcircled{1}}}\frac{\sum\limits_{j}[P_{idle}+\frac{R_{j1}(P_{peak}-P_{idle})}{C_{j1}}]*p_j}{\sum\limits_{j}[\frac{R_{j1}P_{idle}}{C_{i1}}+\frac{R_{j1}(P_{peak}-P_{idle})}{C_{i1}}]*p_j}\\
&{=}\frac{\sum\limits_{j}P_{idle}*p_j+\sum\limits_{j}\frac{R_{j1}(P_{peak}-P_{idle})}{C_{j1}}*p_j}{\sum\limits_{j}\frac{R_{j1}P_{idle}}{C_{i1}}*p_j+\sum\limits_{j}\frac{R_{j1}(P_{peak}-P_{idle})}{C_{i1}}*p_j}\\
&{{\leq}_{\textcircled{2}}}\frac{\sum\limits_{j}P_{idle}*p_j}{\sum\limits_{j}\frac{R_{j1}P_{idle}}{C_{i1}}*p_j}{{\leq}_{\textcircled{3}}}\frac{\sum\limits_{j}P_{idle}*p_j}{\sum\limits_{j}\frac{P_{idle}}{I_{max}}*p_j}={I_{max}}
\end{split}
\end{equation*}
where the second inequality follows from $\sum\limits_{j}P_{idle}*p_j{\geq}\sum\limits_{j}\frac{R_{j1}P_{idle}}{C_{i1}}*p_j$ as $R_{j1}{\leq}C_{i1}$, and a mathematical inequality $\frac{a+c}{b+c}{\leq}\frac{a}{b}$ as $a{\geq}b, c{\geq}0$. The third inequality results from $I_{max}*R_{j1}{\geq}C_{i1}$, i.e., $\frac{R_{j1}P_{idle}}{C_{i1}}{\geq}\frac{P_{idle}}{I_{max}}$. This concludes the Theorem.
\end{proof}
Remark: BPV algorithm only considers to accept the next VM, and does not take into account the performance degradation. Both of the above algorithms sort the VMs by their arrival time and depend only on the information that is available to the algorithms at the scheduling time. So they are also online algorithms. Note that when a VM is allocated to a server, there is a need to update duration time of itself and other VMs that are interfered by it.

An observation is that these algorithms do not consider the life cycle overlapping of VMs. For example, there are three VMs to be scheduled, which are configured as Fig.~\ref{fig:example_overlapping}. In Fig.~\ref{fig:example_overlapping}(a), both of above algorithms cause 2 servers to be active from time 0 to 11 and 1 to 11. A better scheduling (Fig.~\ref{fig:example_overlapping}(b)) is that it assigns $vm_1$  and $vm_3$ in $server_1$ and $vm_2$ in $server_2$. Then we can put $server_2$ into power-saving mode or turn-off from time 2 to 11.
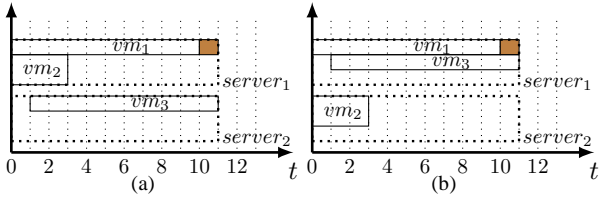
\begin{figure}
\begin{tikzpicture}
    \draw[->, >=latex,line width=1pt] (0.0,0)--(3.75,0) node[pos=1,below] {$t$};
    \draw[->, >=latex,line width=1pt] (0,0.0)--(0,2);
    \draw[dotted] (0.0,1.75)--(0.0,0.0) node[below, scale=0.8] {$0$};
    \draw[dotted] (0.25,1.75)--(0.25,0.0);
    \draw[dotted] (0.5,1.75)--(0.5,0.0) node[below, scale=0.8] {$2$};
    \draw[dotted] (0.75,1.75)--(0.75,0.0);
    \draw[dotted] (1,1.75)--(1,0.0) node[below, scale=0.8] {$4$};
    \draw[dotted] (1.25,1.75)--(1.25,0.0);
    \draw[dotted] (1.5,1.75)--(1.5,0.0) node[below, scale=0.8] {$6$};
    \draw[dotted] (1.75,1.75)--(1.75,0.0) node[below=6pt, scale=0.8] {(a)};
    \draw[dotted] (2,1.75)--(2,0.0) node[below, scale=0.8] {$8$};
    \draw[dotted] (2.25,1.75)--(2.25,0.0);
    \draw[dotted] (2.5,1.75)--(2.5,0.0) node[below, scale=0.8] {${10}$};
    \draw[dotted] (2.75,1.75)--(2.75,0.0);
    \draw[dotted] (3,1.75)--(3,0.0) node[below, scale=0.8] {${12}$};
    \draw[dotted] (3.25,1.75)--(3.25,0.0);

    \draw(0.0,1.5) rectangle (2.5,1.3) node[pos=0.55,text width=0.25em, scale=0.8] {$vm_{1}$};
    \draw[fill=brown](2.5,1.5) rectangle (2.75,1.3);

    \draw(0.0,1.3) rectangle (0.75,0.9) node[pos=0.55, scale=0.8]{$vm_{2}$};
        
    \draw(0.25,0.75) rectangle (2.75,0.55) node[pos=0.55,text width=0.25em, scale=0.8]{$vm_{3}$};
    
    \draw[dotted,line width=0.9pt](0.0,1.5) rectangle (2.75,0.9) node[right=0pt,pos=0.98,scale=0.8] {$server_1$};    
    \draw[dotted,line width=0.9pt](0.0,0.75) rectangle (2.75,0.15) node[right=0pt,pos=0.98,scale=0.8] {$server_2$}; 
        
    \draw[->, >=latex,line width=1pt] (4.0,0)--(7.75,0) node[pos=1,below] {$t$};
    \draw[->, >=latex,line width=1pt] (4,0.0)--(4,2);
    \draw[dotted] (4.0,1.75)--(4.0,0.0) node[below, scale=0.8] {$0$};
    \draw[dotted] (4.25,1.75)--(4.25,0.0);
    \draw[dotted] (4.5,1.75)--(4.5,0.0) node[below, scale=0.8] {$2$};
    \draw[dotted] (4.75,1.75)--(4.75,0.0);
    \draw[dotted] (5,1.75)--(5,0.0) node[below, scale=0.8] {$4$};
    \draw[dotted] (5.25,1.75)--(5.25,0.0);
    \draw[dotted] (5.5,1.75)--(5.5,0.0) node[below, scale=0.8] {$6$};
    \draw[dotted] (5.75,1.75)--(5.75,0.0) node[below=6pt, scale=0.8] {(b)};
    \draw[dotted] (6,1.75)--(6,0.0) node[below, scale=0.8] {$8$};
    \draw[dotted] (6.25,1.75)--(6.25,0.0);
    \draw[dotted] (6.5,1.75)--(6.5,0.0) node[below, scale=0.8] {${10}$};
    \draw[dotted] (6.75,1.75)--(6.75,0.0);
    \draw[dotted] (7,1.75)--(7,0.0) node[below, scale=0.8] {${12}$};
    \draw[dotted] (7.25,1.75)--(7.25,0.0);  
    
    \draw(4.0,1.5) rectangle (6.5,1.3) node[pos=0.55,text width=0.25em, scale=0.8] {$vm_{1}$};
    \draw[fill=brown](6.5,1.5) rectangle (6.75,1.3);
    
    \draw(4.25,1.3) rectangle (6.75,1.1) node[pos=0.55,text width=0.25em, scale=0.8]{$vm_{3}$};
        
    \draw(4.0,0.75) rectangle (4.75,0.35) node[pos=0.55, scale=0.8]{$vm_{2}$};

    \draw[dotted,line width=0.9pt](4.0,1.5) rectangle (6.75,0.9) node[right=0pt,pos=0.98,scale=0.8] {$server_1$};    
    \draw[dotted,line width=0.9pt](4.0,0.75) rectangle (6.75,0.15) node[right=0pt,pos=0.98,scale=0.8] {$server_2$}; 
 
\end{tikzpicture}
\caption{\label{fig:example_overlapping}BPV+MIC and an improved schedule for three VMs. Assume both physical servers have resource capacity of \{$\bm{1}$\} unit and $vm_1$, $vm_2$ and $vm_3$ have resource capacity of \{$\bm{\frac{1}{3}}$\}, \{$\bm{\frac{2}{3}}$\}, and \{$\bm{\frac{1}{3}}$\} unit respectively. The performance degradation factors among them are \{$d_{12} = d_{13} = 0$, $d_{21} = d_{31} = 0.1$, $d_{23} = d_{32} = 0.2$\} while the processing times for the three VMs are given by $p_1 = 10$, $p_2 = 3$, and $p_3 = 9$. (a): BPV or MIC scheduling. (b): A better scheduling.}
\end{figure}

\textbf{Maximum Decreasing Cost Algorithm (MDC)}. Instead of sorting the VMs by the arrival time, MDC algorithm considers the information of all VMs and works like the clustering algorithm. We pursue the minimum cost of energy consumption and performance degradation penalty iteratively. Initially, each VM is allocated to a dedicated server. Next, we decide to repeatedly merge servers together by the form of pairs. The process of merging is to collocate the VMs on one server. There is also a need to update duration time of the VMs which cause interference among them. We define the gain function of merging two server, $\mathcal{S}_u$ and $\mathcal{S}_v$ as the following:
\begin{equation}
Gain(\mathcal{S}_u,\mathcal{S}_v)=Cost(\mathcal{S}_u)+Cost(\mathcal{S}_v)-Cost(\mathcal{S}_u{\cup}\mathcal{S}_v),
\end{equation}
where $Cost(\cdot)$ denotes the total cost of the server according to the cost model defined in Section~\ref{sec:model}-C. With regard to the merger which causes the violation of server capacity, we define the gain as negative number. At each step we choose the merger of two servers that results in the maximum decrease in the total cost. The algorithm ends when the merger of any two servers will produce an negative gain. The pseudo-code for MDC algorithm is summarized in \textbf{\textit{Algorithm.~\ref{algo:MDC_algorithm}}}.
\begin{algorithm}
\SetKwInOut{Input}{input}\SetKwInOut{Output}{output}
\Input{the set of VMs $\mathcal{N}$}
\Output{the scheduling result of VMs}
\Begin{
Initial Servers $\mathcal{S}$=$\{\mathcal{S}_1=\{vm_1\},...,\mathcal{S}_n=\{vm_n\}\}$\; $maxGain=\max_{u,v}Gain(\mathcal{S}_u,\mathcal{S}_{v})$\;
\While{$maxGain{\geq}0$}{
MergeServers($\mathcal{S}_u,\mathcal{S}_v$)\;
Set $\mathcal{S}_u=\mathcal{S}_u{\cup}\mathcal{S}_v$, $\mathcal{S}=\mathcal{S}{\setminus}\mathcal{S}_v$\;
$maxGain=\max_{u,v}Gain(\mathcal{S}_u,\mathcal{S}_{v})$\;
}
Return the set of servers $\mathcal{S}$ and their accommodated VMs, correspondingly.
}
\caption{\label{algo:MDC_algorithm}Maximum Decreasing Cost Algorithm}
\end{algorithm}

To illustrate the different behaviour of these three scheduling strategies we present an example in Fig.~\ref{fig:behaviour_algo}. BPV aggregates the VMs in parts of servers and leaves some servers to be low utilization. MIC is more likely to assign the subsequent VMs to be included by the anterior ones duration its execution when their performance degradation factor is low. As a result, it considers to balance the VMs between the servers. MDC prefers to collocate the VMs that share long life cycle and have low performance degradation factors between them. In summary, these scheduling algorithms have different performances and we will evaluate them in Section~\ref{sec:exp}.
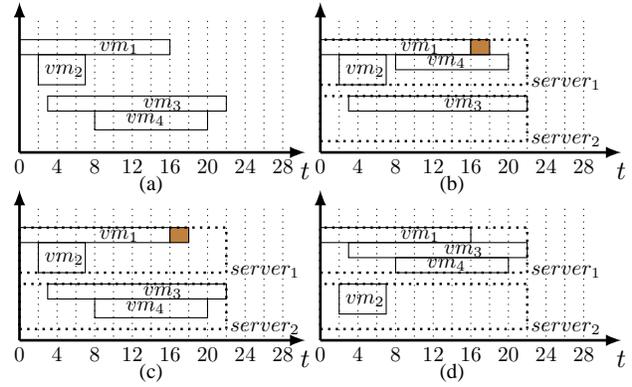
\begin{figure}
\begin{tikzpicture}
    \draw[->, >=latex,line width=1pt] (0.0,0)--(3.75+0.05,0) node[pos=1,below] {$t$};
    \draw[->, >=latex,line width=1pt] (0,0.0)--(0,2);
    \draw[dotted] (0.0,1.75)--(0.0,0.0) node[below, scale=0.8] {$0$};
    \draw[dotted] (0.25,1.75)--(0.25,0.0);
    \draw[dotted] (0.5,1.75)--(0.5,0.0) node[below, scale=0.8] {$4$};
    \draw[dotted] (0.75,1.75)--(0.75,0.0);
    \draw[dotted] (1,1.75)--(1,0.0) node[below, scale=0.8] {$8$};
    \draw[dotted] (1.25,1.75)--(1.25,0.0);
    \draw[dotted] (1.5,1.75)--(1.5,0.0) node[below, scale=0.8] {$12$};

    \draw[dotted] (1.75,1.75)--(1.75,0.0) node[below=6pt, scale=0.8] {(a)};
    \draw[dotted] (2,1.75)--(2,0.0) node[below, scale=0.8] {$16$};
    \draw[dotted] (2.25,1.75)--(2.25,0.0);
    \draw[dotted] (2.5,1.75)--(2.5,0.0) node[below, scale=0.8] {${20}$};
    \draw[dotted] (2.75,1.75)--(2.75,0.0);
    \draw[dotted] (3,1.75)--(3,0.0) node[below, scale=0.8] {${24}$};
    \draw[dotted] (3.25,1.75)--(3.25,0.0);
    \draw[dotted] (3.5,1.75)--(3.5,0) node[below, scale=0.8] {${28}$}; 
    
    \draw(0.0,1.5) rectangle (2.0,1.3) node[pos=0.55,text width=0.25em, scale=0.8] {$vm_{1}$};   
    
    \draw(0.25,1.3) rectangle (0.875,0.9) node[pos=0.55, scale=0.8]{$vm_{2}$};
        
    \draw(0.375,0.75) rectangle (2.75,0.55) node[pos=0.55,text width=0.25em, scale=0.8]{$vm_{3}$};

    \draw(1.0,0.55) rectangle (2.5,0.30) node[pos=0.45,scale=0.8]{$vm_{4}$};

    \draw[->, >=latex,line width=1pt] (4.0,0)--(7.75+0.25,0) node[pos=1,below] {$t$};
    \draw[->, >=latex,line width=1pt] (4,0.0)--(4,2);
    \draw[dotted] (4.0,1.75)--(4.0,0.0) node[below, scale=0.8] {$0$};
    \draw[dotted] (4.25,1.75)--(4.25,0.0);
    \draw[dotted] (4.5,1.75)--(4.5,0.0) node[below, scale=0.8] {$4$};
    \draw[dotted] (4.75,1.75)--(4.75,0.0);
    \draw[dotted] (5,1.75)--(5,0.0) node[below, scale=0.8] {$8$};
    \draw[dotted] (5.25,1.75)--(5.25,0.0);
    \draw[dotted] (5.5,1.75)--(5.5,0.0) node[below, scale=0.8] {$12$};

    \draw[dotted] (5.75,1.75)--(5.75,0.0) node[below=6pt, scale=0.8] {(b)};
    \draw[dotted] (6,1.75)--(6,0.0) node[below, scale=0.8] {$16$};
    \draw[dotted] (6.25,1.75)--(6.25,0.0);
    \draw[dotted] (6.5,1.75)--(6.5,0.0) node[below, scale=0.8] {${20}$};
    \draw[dotted] (6.75,1.75)--(6.75,0.0);
    \draw[dotted] (7,1.75)--(7,0.0) node[below, scale=0.8] {${24}$};
    \draw[dotted] (7.25,1.75)--(7.25,0.0);
    \draw[dotted] (7.5,1.75)--(7.5,0.0) node[below, scale=0.8] {${28}$};  
    
    \draw(4.0,1.5) rectangle (6,1.3) node[pos=0.55,text width=0.25em, scale=0.8] {$vm_{1}$};
    \draw[fill=brown](6,1.5) rectangle (6.25,1.3);
    
    \draw(4.25,1.3) rectangle (4.875,0.9) node[pos=0.55, scale=0.8]{$vm_{2}$};
    
    \draw(5.0,1.3) rectangle (6.5,1.1) node[pos=0.45,scale=0.8]{$vm_{4}$};
    
    \draw(4.375,0.75) rectangle (6.75,0.55) node[pos=0.55,text width=0.25em, scale=0.8]{$vm_{3}$};
           
    \draw[dotted,line width=0.9pt](4.0,1.5) rectangle (6.75,0.9) node[right=0pt,pos=0.98,scale=0.8] {$server_1$};    
    \draw[dotted,line width=0.9pt](4.0,0.75) rectangle (6.75,0.15) node[right=0pt,pos=0.98,scale=0.8] {$server_2$};

    \draw[->, >=latex,line width=1pt] (0.0,-2.5)--(3.75+0.05,-2.5) node[pos=1,below] {$t$};
    \draw[->, >=latex,line width=1pt] (0,-2.5)--(0,-0.5);
    
    \draw[dotted] (0.0,1.75-2.5)--(0.0,0.0-2.5) node[below, scale=0.8] {$0$};
    \draw[dotted] (0.25,1.75-2.5)--(0.25,0.0-2.5);
    \draw[dotted] (0.5,1.75-2.5)--(0.5,0.0-2.5) node[below, scale=0.8] {$4$};
    \draw[dotted] (0.75,1.75-2.5)--(0.75,0.0-2.5);
    \draw[dotted] (1,1.75-2.5)--(1,0.0-2.5) node[below, scale=0.8] {$8$};
    \draw[dotted] (1.25,1.75-2.5)--(1.25,0.0-2.5);
    \draw[dotted] (1.5,1.75-2.5)--(1.5,0.0-2.5) node[below, scale=0.8] {$12$};

    \draw[dotted] (1.75,1.75-2.5)--(1.75,0.0-2.5) node[below=6pt, scale=0.8] {(c)};
    \draw[dotted] (2,1.75-2.5)--(2,0.0-2.5) node[below, scale=0.8] {$16$};
    \draw[dotted] (2.25,1.75-2.5)--(2.25,0.0-2.5);
    \draw[dotted] (2.5,1.75-2.5)--(2.5,0.0-2.5) node[below, scale=0.8] {${20}$};
    \draw[dotted] (2.75,1.75-2.5)--(2.75,0.0-2.5);
    \draw[dotted] (3,1.75-2.5)--(3,0.0-2.5) node[below, scale=0.8] {${24}$};
    \draw[dotted] (3.25,1.75-2.5)--(3.25,0.0-2.5);
    \draw[dotted] (3.5,1.75-2.5)--(3.5,0-2.5) node[below, scale=0.8] {${28}$}; 
    
    \draw(0.0,1.5-2.5) rectangle (2,1.3-2.5) node[pos=0.55,text width=0.25em, scale=0.8] {$vm_{1}$};
    \draw[fill=brown](2,1.5-2.5) rectangle (2.25,1.3-2.5);
    
    \draw(0.25,1.3-2.5) rectangle (0.875,0.9-2.5) node[pos=0.55, scale=0.8]{$vm_{2}$};
        
    \draw(0.375,0.75-2.5) rectangle (2.75,0.55-2.5) node[pos=0.55,text width=0.25em, scale=0.8]{$vm_{3}$};
    
    \draw(1.0,0.55-2.5) rectangle (2.5,0.30-2.5) node[pos=0.45,scale=0.8]{$vm_{4}$};
    
    \draw[dotted,line width=0.9pt](0.0,1.5-2.5) rectangle (2.75,0.9-2.5) node[right=0pt,pos=0.98,scale=0.8] {$server_1$};    
    \draw[dotted,line width=0.9pt](0.0,0.75-2.5) rectangle (2.75,0.15-2.5) node[right=0pt,pos=0.98,scale=0.8] {$server_2$};

    \draw[->, >=latex,line width=1pt] (4.0,0-2.5)--(7.75+0.25,0-2.5) node[pos=1,below] {$t$};
    \draw[->, >=latex,line width=1pt] (4,0.0-2.5)--(4,2-2.5);
    \draw[dotted] (4.0,1.75-2.5)--(4.0,0.0-2.5) node[below, scale=0.8] {$0$};
    \draw[dotted] (4.25,1.75-2.5)--(4.25,0.0-2.5);
    \draw[dotted] (4.5,1.75-2.5)--(4.5,0.0-2.5) node[below, scale=0.8] {$4$};
    \draw[dotted] (4.75,1.75-2.5)--(4.75,0.0-2.5);
    \draw[dotted] (5,1.75-2.5)--(5,0.0-2.5) node[below, scale=0.8] {$8$};
    \draw[dotted] (5.25,1.75-2.5)--(5.25,0.0-2.5);
    \draw[dotted] (5.5,1.75-2.5)--(5.5,0.0-2.5) node[below, scale=0.8] {$12$};
 
    \draw[dotted] (5.75,1.75-2.5)--(5.75,0.0-2.5) node[below=6pt, scale=0.8] {(d)};
    \draw[dotted] (6,1.75-2.5)--(6,0.0-2.5) node[below, scale=0.8] {$16$};
    \draw[dotted] (6.25,1.75-2.5)--(6.25,0.0-2.5);
    \draw[dotted] (6.5,1.75-2.5)--(6.5,0.0-2.5) node[below, scale=0.8] {${20}$};
    \draw[dotted] (6.75,1.75-2.5)--(6.75,0.0-2.5);
    \draw[dotted] (7,1.75-2.5)--(7,0.0-2.5) node[below, scale=0.8] {${24}$};
    \draw[dotted] (7.25,1.75-2.5)--(7.25,0.0-2.5);
    \draw[dotted] (7.5,1.75-2.5)--(7.5,0.0-2.5) node[below, scale=0.8] {${28}$};  
    
    \draw(4.0,1.5-2.5) rectangle (6,1.3-2.5) node[pos=0.55,text width=0.25em, scale=0.8] {$vm_{1}$};
    
    \draw(4.375,1.3-2.5) rectangle (6.75,1.1-2.5) node[pos=0.55,text width=0.25em, scale=0.8]{$vm_{3}$};
    
    \draw(5.0,1.1-2.5) rectangle (6.5,0.9-2.5) node[pos=0.45,scale=0.8]{$vm_{4}$};
    
    \draw(4.25,0.75-2.5) rectangle (4.875,0.35-2.5) node[pos=0.55, scale=0.8]{$vm_{2}$};
    
    \draw[dotted,line width=0.9pt](4.0,1.5-2.5) rectangle (6.75,0.9-2.5) node[right=0pt,pos=0.98,scale=0.8] {$server_1$};    
    \draw[dotted,line width=0.9pt](4.0,0.75-2.5) rectangle (6.75,0.15-2.5) node[right=0pt,pos=0.98,scale=0.8] {$server_2$}; 
    
\end{tikzpicture}
\caption{\label{fig:behaviour_algo}Four VMs scheduling. Assume each physical server has resource capacity of \{$\bm{1}$\} unit. Each VM from \{$vm_1,vm_3,vm_4$\} demands \{$\bm{\frac{1}{3}}$\} unit while $vm_2$ requires \{$\bm{\frac{2}{3}}$\} unit. The processing times for them are given by $p_1=16$, $p_2=5$, $p_3=19$, and $p_4=12$. The performance degradation factor $d_{21}$ is $0.25$, and other factors among them are $0$. (a): VM instance. (b): BPV scheduling. (c): MIC scheduling. (d): MDC scheduling.}
\end{figure}

\subsection{Online Algorithm for Dynamic Problem}
In this section, we introduce the online version of the VM scheduling, in which a sequence $\mathcal{N}$ of VMs arrive over time, where $\mathcal{N}=\{vm_0,vm_1,...,vm_j,...\}$. Each VM $vm_j$ must be assigned upon its arrival, without information about future VMs $\{vm_{j^{'}}|j^{'}>j\}$. We explore algorithm that schedules each incoming VM by dispatching them to current active servers or a new server that be activated. Recall that BPV and MIC algorithms depend only on the informations that are available to the algorithms when we schedule upon the arrival of a VM $vm_j$ at time $t$ instead of sorting the VMs by their arrival time. They can be modified to support the scheduling in online version, and they are denoted by OBPV and OMIC, respectively. Specially, we derive a competitive ratio for the OMIC algorithm. We say that for the minimization problem, an algorithm is $\gamma$-$competitive$ if for all the problem instances, it returns the cost at most $\gamma$ times the cost of the optimal offline solution. From the Theorem~\ref{theorem:approximation}, we have the following theorem:
\begin{theorem}
The competitive ratio of OMIC algorithm for VM scheduling which aims at arbitrating between operational cost and performance degradation is at most $I_{max}$, where $I_{max}$ has the same mean of Theorem~\ref{theorem:approximation}. \label{theorem:competitive}
\end{theorem}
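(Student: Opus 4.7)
The plan is to leverage Theorem~\ref{theorem:approximation} almost directly, since the only structural difference between MIC and OMIC is the information assumption rather than the decision rule. My first step is to observe that MIC processes VMs in non-decreasing order of arrival time, and for each $vm_j$ the placement decision depends solely on the state of the servers produced by the VMs already placed. In the online setting, VMs arrive precisely in non-decreasing arrival time, and OMIC places each arriving $vm_j$ using the same greedy increment-of-total-cost criterion. Hence OMIC and MIC produce identical schedules on every instance, and in particular incur identical total cost.

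Next, I would reuse the per-VM charging argument from the proof of Theorem~\ref{theorem:approximation}. The upper bound on OMIC's cost contribution for $vm_j$ is $[P_{idle}+\tfrac{R_{j1}(P_{peak}-P_{idle})}{C_{j1}}]\cdot p_j$, because OMIC always has the option of opening a fresh server for $vm_j$ and the greedy rule never chooses a worse placement. The lower bound on the offline optimum's cost contribution for $vm_j$ is $[\tfrac{R_{j1}P_{idle}}{C_{i1}}+\tfrac{R_{j1}(P_{peak}-P_{idle})}{C_{i1}}]\cdot p_j$, obtained by decomposing the power of whichever server OPT runs $vm_j$ on in proportion to CPU share. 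Summing over $j$ and applying the same two inequalities $\tfrac{a+c}{b+c}\le\tfrac{a}{b}$ (for $a\ge b$, $c\ge 0$) and $I_{max}\cdot R_{j1}\ge C_{i1}$ used in Theorem~\ref{theorem:approximation} yields the ratio $I_{max}$.

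The only subtlety I would flag is that the offline OPT against which we compare has full knowledge of the VM sequence, whereas OMIC does not. This is precisely why we are proving a competitive ratio rather than just an approximation ratio. The above charging argument is robust to this because both the per-VM upper bound on OMIC and the per-VM lower bound on OPT are pointwise; neither requires knowing OPT's placement to establish the inequality. This step is where one must be careful, since ordinarily an online algorithm can incur unbounded overhead when OPT exploits foreknowledge of future arrivals; here the interference penalty and the static-energy term together guarantee that even OPT must pay at least $\tfrac{R_{j1}P_{idle}}{C_{i1}}\cdot p_j$ for each VM, which is what enables the bound to survive the transition from offline to online.

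Finally, I would state the conclusion: combining the equivalence of OMIC's and MIC's schedules with the approximation bound of Theorem~\ref{theorem:approximation}, or equivalently applying the pointwise charging argument directly to OMIC against the optimal offline solution, gives $\mathrm{cost}(\mathrm{OMIC})\le I_{max}\cdot \mathrm{cost}(\mathrm{OPT})$, which proves Theorem~\ref{theorem:competitive}. The main obstacle is really just articulating the equivalence of online and offline decision-making for this particular greedy rule; once that is in place, the calculation is a restatement of the previous proof.
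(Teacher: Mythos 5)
Your proposal is correct and follows essentially the same route as the paper, which gives no separate proof of this theorem but simply derives it from Theorem~\ref{theorem:approximation} after remarking that MIC already depends only on information available at each VM's scheduling time and is therefore itself an online algorithm. Your write-up merely makes explicit the two ingredients the paper leaves implicit: that OMIC and MIC produce identical schedules, and that the per-VM charging bound in Theorem~\ref{theorem:approximation} was already stated against the offline optimum, so it transfers verbatim to a competitive-ratio claim.
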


\subsection{Incorporating VM batch arrival and VM reservation}
In the previous section, the VMs are consider to arrive one by one and there are no information about the future arriving VMs. In order to match the cloud data centers, we incorporate the following two properties to the scheduling design:
\begin{itemize}
\item There are a set $\mathcal{N}_t$ of VMs to be scheduled at time $t$ due to the many users submit their VMs to the cloud data center at the same time.
\item There are a set $\mathcal{N}_t^f$ of reserved VMs at time $t$ due to users reserve for lower costs and reserving capacity in the cloud data center.  
\end{itemize}
It defines the time $t$ as scheduling time only when there are some VMs need to be started at this time. According to the definition of the reserved VMs, We have the arrival time relationship $a_j>a_j^{'}, \forall{vm_j}{\in}\mathcal{N}_t^f,\forall{vm_j^{'}}{\in}\mathcal{N}_t$. In this situation, the problem is transformed to schedule a set of VMs, i.e., $\mathcal{N}_t{\cup}\mathcal{N}_t^f$, to be allocated on the cloud data center. The difference from the offline scheduling is that at the begin of the scheduling there are some VMs had been allocated on the cloud data center. Consider an example scheduling time $t=2$ in previous Fig.~\ref{fig:behaviour_algo}, incoming VM $vm_2$ and reserved VMs $\{vm_3,vm_4\}$ need to be scheduled at this time. We prefer the scheduling of Fig.~\ref{fig:behaviour_algo}(d) to Fig.~\ref{fig:behaviour_algo}(c) as it is known that a server can be put into power-saving mode or shut down only if there are no VMs active on it. Then it causes a problem of which VM first to be scheduled if the OMIC algorithm is used to schedule VM one by one. It defines the alignment ratio of VM to server as following: the ratio of VMs{'} completion time to server's completion time. An obvious intuition is to allocate the VMs to maximize the alignment ratio if these VMs have weak performance interference, i.e., to align the VMs and their server. Then the points is to allocate the VMs to their best candidate server. Based on the above analysis, we present an algorithm from the servers' perspective. First, it supposes that each VM $vm_j{\in}(\mathcal{N}_t{\cup}\mathcal{N}_t^f)$ is allocated on a dedicate virtual server. Each server proposes the profits to other VMs that are allocated on virtual servers. To be specific, it defines the profit metric as the following:
\begin{equation}
Profit_{i,j}=Cost(vm_j)-AddCost(server_i,vm_j),\label{eq:profitCalc}
\end{equation}
where $Cost(vm_j)$ is the total cost of a server to run $vm_j$ alone, and $AddCost(server_i,vm_j)$ is the increment total cost of run $vm_j$ on $server_i$. When $vm_j$ cannot be allocated on the server $server_i$, the $Profit_{i,j}$ is simply set to negative value. The algorithm allocates the VMs iteratively. In each round, it picks the maximum profit, i.e., $\{\max\limits_{i,j}Profit_{i,j}|Profit_{i,j}>=0\}$. Then the $vm_j$ is allocated on $server_i$. The algorithm stops when $\max\limits_{i,j}Profit_{i,j}<0$. In this situation, it says that the total cost cannot be improved. The VMs are allocated to their current servers. It is summarized in \textbf{\textit{Algorithm.~\ref{algo:IVP_algorithm}}} ($IVP$). 
\begin{algorithm}
\SetKwInOut{Input}{input}\SetKwInOut{Output}{output}
\Input{the set of VMs $\{\mathcal{N}_t{\cup}\mathcal{N}_t^f\}$, current active servers $\mathcal{S}_I$ at time $t$}
\Output{the scheduling result of VMs}
\Begin{
Set virtual servers $\mathcal{S}_{V}$=$\{\mathcal{S}_j$=$\{vm_j\}|vm_j{\in}\{\mathcal{N}_t{\cup}\mathcal{N}_t^f\}\}$\;
Set all servers $\mathcal{S}=\mathcal{S}_{V}{\cup}\mathcal{S}_{I}$\;
Set $maxProfit=\max\limits_{i,j}Profit_{i,j}$, $i{\in}\mathcal{S}$\;
\While{$maxProfit{\geq}0$}{
The configuration $Server_{i}(t)=\argmax\limits_{i{\in}\mathcal{S}}Profit_{i,j}$\;
Update $server_i$ and Delete $vm_j$ from $\{\mathcal{N}_t{\cup}\mathcal{N}_t^f\}$\;
Set $maxProfit=\max\limits_{i,j}Profit_{i,j}$\;
}
Return the plan of VMs.
}
\caption{\label{algo:IVP_algorithm}Incorporating VM Plan Online Algorithm}
\end{algorithm}

We use an example to explain this plan algorithm, as illustrated in Fig.~\ref{fig:behaviour_IVP}. Assume that at time $t=2$, there is one active server $server_1$ with $vm_1$ running on it, and there are one VM $vm_2$ arriving and two reserved VMs $\{vm_3,_vm_4\}$. As $t=2$ is the scheduling time, the plan algorithm is triggered. First each VM $vm_j{\in}\{vm_2,vm_3,vm_4\}$ is allocated on a virtual server $server_i{\in}\{server_2,server_3,server_4\}$, respectively. Then the profit $Profit_{i,j}$ is calculated according to Equation.~\ref{eq:profitCalc}. As $Profit_{1,3}$ is the maximum profit, the algorithm allocates the $vm_3$ to $server_1$ in the first round. This procedure repeats in the second round, and $vm_4$ is allocated on $server_1$. At last, it generates an allocation showed in Fig.~\ref{fig:behaviour_IVP}(b). Assume at time $t=7$, there is one VM $vm_5$ arriving. So the plan algorithm is triggered, and generates an allocation showed in Fig.~\ref{fig:behaviour_IVP}(d). Note that the allocation of reserved VM $vm_4$ is changed from $server_1$ to $server_2$.
\begin{figure}
\begin{tikzpicture}
    \draw[->, >=latex,line width=1pt] (0.0,0)--(3.75+0.05,0) node[pos=1,below] {$t$};
    \draw[->, >=latex,line width=1pt] (0,0.0)--(0,2);
    \draw[dotted] (0.0,1.75)--(0.0,0.0) node[below, scale=0.8] {$0$};
    \draw[dotted] (0.25,1.75)--(0.25,0.0);
    \draw[dotted] (0.5,1.75)--(0.5,0.0) node[below, scale=0.8] {$4$};
    \draw[dotted] (0.75,1.75)--(0.75,0.0);
    \draw[dotted] (1,1.75)--(1,0.0) node[below, scale=0.8] {$8$};
    \draw[dotted] (1.25,1.75)--(1.25,0.0);
    \draw[dotted] (1.5,1.75)--(1.5,0.0) node[below, scale=0.8] {$12$};

    \draw[dotted] (1.75,1.75)--(1.75,0.0) node[below=6pt, scale=0.8] {(a)$t_0=2$};
    \draw[dotted] (2,1.75)--(2,0.0) node[below, scale=0.8] {$16$};
    \draw[dotted] (2.25,1.75)--(2.25,0.0);
    \draw[dotted] (2.5,1.75)--(2.5,0.0) node[below, scale=0.8] {${20}$};
    \draw[dotted] (2.75,1.75)--(2.75,0.0);
    \draw[dotted] (3,1.75)--(3,0.0) node[below, scale=0.8] {${24}$};
    \draw[dotted] (3.25,1.75)--(3.25,0.0);
    \draw[dotted] (3.5,1.75)--(3.5,0) node[below, scale=0.8] {${28}$}; 
    
    \draw(0.0,1.5) rectangle (2.0,1.3) node[pos=0.55,text width=0.25em, scale=0.8] {$vm_{1}$};
        
    \draw(0.375,0.75) rectangle (2.75,0.55) node[pos=0.55,text width=0.25em, scale=0.8]{$vm_{3}$};
    
    \draw(0.25,0.55) rectangle (0.875,0.15) node[pos=0.55, scale=0.8]{$vm_{2}$};
    
    \draw(1.0,0.55) rectangle (2.5,0.30) node[pos=0.45,scale=0.8]{$vm_{4}$};
    
    \draw[dotted,line width=0.9pt](0.0,1.5) rectangle (2.75,0.9) node[right=0pt,pos=0.98,scale=0.8] {$server_1$};   
             
    \draw[->, >=latex,line width=1pt] (4.0,0)--(7.75+0.25,0) node[pos=1,below] {$t$};
    \draw[->, >=latex,line width=1pt] (4,0.0)--(4,2);
    \draw[dotted] (4.0,1.75)--(4.0,0.0) node[below, scale=0.8] {$0$};
    \draw[dotted] (4.25,1.75)--(4.25,0.0);
    \draw[dotted] (4.5,1.75)--(4.5,0.0) node[below, scale=0.8] {$4$};
    \draw[dotted] (4.75,1.75)--(4.75,0.0);
    \draw[dotted] (5,1.75)--(5,0.0) node[below, scale=0.8] {$8$};
    \draw[dotted] (5.25,1.75)--(5.25,0.0);
    \draw[dotted] (5.5,1.75)--(5.5,0.0) node[below, scale=0.8] {$12$};

    \draw[dotted] (5.75,1.75)--(5.75,0.0) node[below=6pt, scale=0.8] {(b)};
    \draw[dotted] (6,1.75)--(6,0.0) node[below, scale=0.8] {$16$};
    \draw[dotted] (6.25,1.75)--(6.25,0.0);
    \draw[dotted] (6.5,1.75)--(6.5,0.0) node[below, scale=0.8] {${20}$};
    \draw[dotted] (6.75,1.75)--(6.75,0.0);
    \draw[dotted] (7,1.75)--(7,0.0) node[below, scale=0.8] {${24}$};
    \draw[dotted] (7.25,1.75)--(7.25,0.0);
    \draw[dotted] (7.5,1.75)--(7.5,0.0) node[below, scale=0.8] {${28}$};  
    
    \draw(4.0,1.5) rectangle (6,1.3) node[pos=0.55,text width=0.25em, scale=0.8] {$vm_{1}$};
    
    \draw(4.375,1.3) rectangle (6.75,1.1) node[pos=0.55,text width=0.25em, scale=0.8]{$vm_{3}$};
    
    \draw(5.0,1.1) rectangle (6.5,0.9) node[pos=0.45,scale=0.8]{$vm_{4}$};
    
    \draw(4.25,0.75) rectangle (4.875,0.35) node[pos=0.55, scale=0.8]{$vm_{2}$};
    
    \draw[dotted,line width=0.9pt](4.0,1.5) rectangle (6.75,0.9) node[right=0pt,pos=0.98,scale=0.8] {$server_1$};    
    \draw[dotted,line width=0.9pt](4.0,0.75) rectangle (6.75,0.15) node[right=0pt,pos=0.98,scale=0.8] {$server_2$};    
  
    \draw[->, >=latex,line width=1pt] (0.0,-2.75)--(3.75+0.05,-2.75) node[pos=1,below] {$t$};
    \draw[->, >=latex,line width=1pt] (0,-2.75)--(0,-0.5);
    
    \draw[dotted] (0.0,1.75-2.5)--(0.0,0.0-2.75) node[below, scale=0.8] {$0$};
    \draw[dotted] (0.25,1.75-2.5)--(0.25,0.0-2.75);
    \draw[dotted] (0.5,1.75-2.5)--(0.5,0.0-2.75) node[below, scale=0.8] {$4$};
    \draw[dotted] (0.75,1.75-2.5)--(0.75,0.0-2.75);
    \draw[dotted] (1,1.75-2.5)--(1,0.0-2.75) node[below, scale=0.8] {$8$};
    \draw[dotted] (1.25,1.75-2.5)--(1.25,0.0-2.75);
    \draw[dotted] (1.5,1.75-2.5)--(1.5,0.0-2.75) node[below, scale=0.8] {$12$};

    \draw[dotted] (1.75,1.75-2.5)--(1.75,0-2.75) node[below=6pt, scale=0.8] {(c)$t_1=7$};
    \draw[dotted] (2,1.75-2.5)--(2,0.0-2.75) node[below, scale=0.8] {$16$};
    \draw[dotted] (2.25,1.75-2.5)--(2.25,0.0-2.75);
    \draw[dotted] (2.5,1.75-2.5)--(2.5,0.0-2.75) node[below, scale=0.8] {${20}$};
    \draw[dotted] (2.75,1.75-2.5)--(2.75,0.0-2.75);
    \draw[dotted] (3,1.75-2.5)--(3,0.0-2.75) node[below, scale=0.8] {${24}$};
    \draw[dotted] (3.25,1.75-2.5)--(3.25,0.0-2.75);
    \draw[dotted] (3.5,1.75-2.5)--(3.5,0-2.75) node[below, scale=0.8] {${28}$};

    \draw(0.0,1.5-2.75) rectangle (2,1.3-2.75) node[pos=0.55,text width=0.25em, scale=0.8] {$vm_{1}$};
    
    \draw(0.375,1.3-2.75) rectangle (2.75,1.1-2.75) node[pos=0.55,text width=0.25em, scale=0.8]{$vm_{3}$};
    
    \draw(1.0,1.1-2.75) rectangle (2.5,0.9-2.75) node[pos=0.45,scale=0.8]{$vm_{4}$};
    
    \draw(0.25,0.75-2.75) rectangle (0.875,0.35-2.75) node[pos=0.55, scale=0.8]{$vm_{2}$};
    
    \draw[dotted,line width=0.9pt](0.0,1.5-2.75) rectangle (2.75,0.9-2.75) node[right=0pt,pos=0.98,scale=0.8] {$server_1$};    
    \draw[dotted,line width=0.9pt](0.0,0.75-2.75) rectangle (2.75,0.15-2.75) node[right=0pt,pos=0.98,scale=0.8] {$server_2$};
    
    \draw(0.875,1.5+0.15+0.2-2.75) rectangle (2.75,1.5+0.15-2.75) node[pos=0.50,scale=0.8]{$vm_{5}$};

    \draw[->, >=latex,line width=1pt] (4.0,0-2.75)--(7.75+0.25,0-2.75) node[pos=1,below] {$t$};
    \draw[->, >=latex,line width=1pt] (4,0.0-2.75)--(4,2-2.5);
    \draw[dotted] (4.0,1.75-2.5)--(4.0,0.0-2.75) node[below, scale=0.8] {$0$};
    \draw[dotted] (4.25,1.75-2.5)--(4.25,0.0-2.75);
    \draw[dotted] (4.5,1.75-2.5)--(4.5,0.0-2.75) node[below, scale=0.8] {$4$};
    \draw[dotted] (4.75,1.75-2.5)--(4.75,0.0-2.75);
    \draw[dotted] (5,1.75-2.5)--(5,0.0-2.75) node[below, scale=0.8] {$8$};
    \draw[dotted] (5.25,1.75-2.5)--(5.25,0.0-2.75);
    \draw[dotted] (5.5,1.75-2.5)--(5.5,0.0-2.75) node[below, scale=0.8] {$12$};

    \draw[dotted] (5.75,1.75-2.5)--(5.75,0.0-2.75) node[below=6pt, scale=0.8] {(d)};
    \draw[dotted] (6,1.75-2.5)--(6,0.0-2.75) node[below, scale=0.8] {$16$};
    \draw[dotted] (6.25,1.75-2.5)--(6.25,0.0-2.75);
    \draw[dotted] (6.5,1.75-2.5)--(6.5,0.0-2.75) node[below, scale=0.8] {${20}$};
    \draw[dotted] (6.75,1.75-2.5)--(6.75,0.0-2.75);
    \draw[dotted] (7,1.75-2.5)--(7,0.0-2.75) node[below, scale=0.8] {${24}$};
    \draw[dotted] (7.25,1.75-2.5)--(7.25,0.0-2.75);
    \draw[dotted] (7.5,1.75-2.5)--(7.5,0.0-2.75) node[below, scale=0.8] {${28}$};  
    
    \draw(4.0,1.5-2.75) rectangle (6,1.3-2.75) node[pos=0.55,text width=0.25em, scale=0.8] {$vm_{1}$};
    
    \draw(4.375,1.3-2.75) rectangle (6.75,1.1-2.75) node[pos=0.55,text width=0.25em, scale=0.8]{$vm_{3}$};
    
    \draw(4.875,1.1-2.75) rectangle (6.75,0.9-2.75) node[pos=0.50,scale=0.8]{$vm_{5}$};
    
    \draw(5.0,0.75-2.75) rectangle (6.5,0.55-2.75) node[pos=0.45,scale=0.8]{$vm_{4}$};
    
    \draw(4.25,0.75-2.75) rectangle (4.875,0.35-2.75) node[pos=0.55, scale=0.8]{$vm_{2}$};
    
    \draw[dotted,line width=0.9pt](4.0,1.5-2.75) rectangle (6.75,0.9-2.75) node[right=0pt,pos=0.98,scale=0.8] {$server_1$};    
    \draw[dotted,line width=0.9pt](4.0,0.75-2.75) rectangle (6.75,0.15-2.75) node[right=0pt,pos=0.98,scale=0.8] {$server_2$}; 
    
\end{tikzpicture}
\caption{\label{fig:behaviour_IVP}Incorporating VMs batch arrival and VMs reservation scheduling. Assume each physical server has resource capacity of \{$\bm{1}$\} unit. $vm_1,vm_2,vm_3,vm_4,vm_5$ have resource capacity \{$\bm{\frac{1}{3}}$\}, \{$\bm{\frac{2}{3}}$\}, \{$\bm{\frac{1}{3}}$\}, \{$\bm{\frac{1}{3}}$\}, and \{$\bm{\frac{1}{3}}$\} unit respectively. The processing times for them are given by $p_1=16$, $p_2=5$, $p_3=19$, $p_4=12$, and $p_5=15$. The performance degradation factor $d_{21}$ is $0.25$, and other factors among them are $0$. (a): $t=2$ Initial servers and VMs. (b): $t=2$ Allocation result. (c): $t=7$ Initial servers and VMs. (d): $t=7$ Allocation result.}
\end{figure}
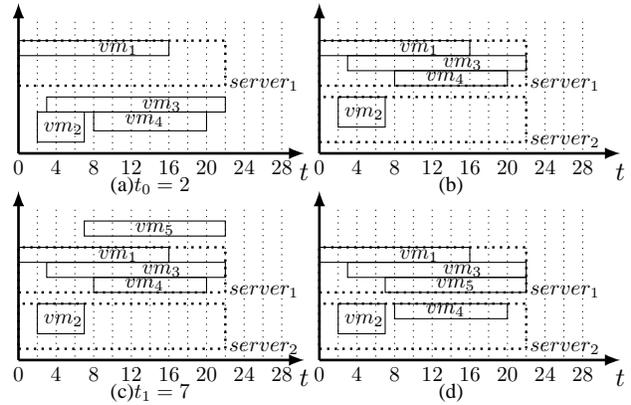

The time complexity of this algorithm is $O(mn^2)$ where $m$ is the number of servers ($\vert\mathcal{S}_{V}{\cup}\mathcal{S}_{I}\vert$) and $n$ is the number of VMs ($\vert\mathcal{N}_t{\cup}\mathcal{N}_t^f\vert$). This is followed from that each server $server_i$ proposes a profit to VM $vm_j$, and in each round we fix a VM $vm_j$.

\subsection{Distributed Design towards Data Center Scale}
In a large data center, it is time-consuming to gather the detailed information about each server and run the algorithm on a single server. We now propose a distribution scheme, which opposes to the algorithm introduced in the previous section that centralizes the information and picks up the best candidate server. Upon each new VM arriving at the data center, the information of VMs, which are waiting to be allocated, are passed to each active server. Each server $server_i$ maintains the information of VMs. Next, the algorithm proceeds in stages, and synchronizes using a common clock. In the first stage, a single client server $server_i$ proposes profits to the VMs and sends the maximum profit $maxProfit_{i,j}$ of $vm_j$ to the distribution server (dispatcher). The distribution server collects the maximum profits from all client servers and chooses the $max$-$maxProfit$, i.e., the current maximum benefit from the allocation of $vm_{\widehat{j}}$ on $server_{\widehat{i}}$. Then the allocation decision of stage $o$ is broadcast to client servers. In the subsequent stages, the client server $server_i$ receives the decision message $\langle{\widehat{i},\widehat{j}}\rangle$ from the distribution server, and proceeds the following two chooses: 1) If the profit of its $(o-1)$ stage is chosen, it fixes the $vm_{\widehat{j}}$ on it and removes the $vm_{\widehat{j}}$ from the unscheduled VMs. 2) If the profit of its $(o-1)$ stage is not chosen, it justly removes the $vm_{\widehat{j}}$ from the unscheduled VMs. The procedure of profit proposing is the same as the previous stage. Assume there are some VMs unscheduled, i.e., they are not benefit from being allocated to the client servers or cannot be allocated to current active servers. We run \textit{Algorithm.~\ref{algo:IVP_algorithm}} to schedule them with the input of these VMs and some current inactive servers. The pseudo-code of this algorithm is summarized in \textbf{\textit{Algorithm.~\ref{algo:TDPP_algorithm}}}.
\begin{algorithm}
\SetKwInOut{Input}{input}\SetKwInOut{Output}{output}
\Input{the set of VMs $\{\mathcal{N}_t{\cup}\mathcal{N}_t^f\}$ at time $t$}
\Output{the allocation results of VMs}
$Distribution Server:$\\
\Begin{
Initialized round: Broadcast $\mathcal{J}=\langle\{\mathcal{N}_t{\cup}\mathcal{N}_t^f\}\rangle$\;
\ForEach{round $o=1,2,...$}{
Receive message $\langle{maxProfit_{i,j}}\rangle$\;
\If{$\max\limits_{i,j}maxProfit_{i,j}{\geq}0$}
{
Pick $\langle{i,j}\rangle$=$\argmax\limits_{i,j}maxProfit_{i,j}$\;
Broadcast $\langle{i,j}\rangle$\;
Update $\mathcal{J}{\setminus}j$\;
}
}
\If{$\mathcal{J}{\neq}\emptyset$}
{
Run \textbf{\textit{Algorithm.~\ref{algo:IVP_algorithm}}}, i.e., $IVP(\mathcal{J},\emptyset)$;
}
Return the allocation of VMs.
}
$Client Servers:$\\
\Begin{
\ForEach{server $server_i$ in parallel}{
Receive message $\mathcal{O}$ from $Distribution Server$\;
\uIf{$\mathcal{O}=\langle\{\mathcal{N}_t{\cup}\mathcal{N}_t^f\}\rangle$}{
Save $\mathcal{J}=\mathcal{O}$\;
Set $\mathcal{S}_{V}$=$\{\mathcal{S}_j$=$\{vm_j\}|vm_j{\in}\{\mathcal{N}_t{\cup}\mathcal{N}_t^f\}\}$\;
Set $\mathcal{S}=\mathcal{S}_{V}{\cup}\{server_i\}$\;
Set $maxProfit_{i,j}=\max\limits_{i,j}Profit_{i,j}$\;
Send $\langle{maxProfit_{i,j}}\rangle$ to the $Distribution Server$\;
}
\uElseIf{$\mathcal{O}={\langle{\widehat{i},\widehat{j}}\rangle}$}{
\If{$\widehat{i}{==}server_i$}{
Fix $\widehat{j}$ on $server_i$ and Update $server_i$\;
}
Set $\mathcal{S}_{V}$=$\{\mathcal{S}_j$=$\{vm_j\}|vm_j{\in}\{\mathcal{J}{\setminus}{\widehat{j}}\}\}$\;
Set $\mathcal{S}=\mathcal{S}_{V}{\cup}\{server_i\}$\;
Set $maxProfit_{i,j}=\max\limits_{i,j}Profit_{i,j}$\;
Send $\langle{maxProfit_{i,j}}\rangle$ to the $Distribution Server$\;
}
}
}
\caption{\label{algo:TDPP_algorithm}VM Profit Plan Algorithm}
\end{algorithm}

\section{Performance Evaluation}
\label{sec:exp}
In this section, we study the performance of proposed algorithms on several comprehensive VM scheduling problems.

\subsection{Evaluation Setup}
\textit{Simulation Settings}: The simulations are run in a data center that equips physical servers with computing resource of $12$ Cores (E.g., \textit{HP ProLiant DL385 G6}). For simplicity, we assume that the servers are homogeneous in the data center and the VMs take up the total resource of their request demand. The configuration of VMs refers to the types of instances available in Amazon EC2 \cite{Amazon}. For example, a type VM, so called \textit{m1.small}, with $1$ Core computing units, $1.7$ GB memory and 160 GB storage space. As we focus on studying the arbitrating between energy consumption and performance interference degradation, we omit other resource bounds, such as memory, in the simulations. Four different types of VMs are available to be chosen with computing resource of $1$ Core, $2$ Cores, $4$ Cores, $8$ Cores, respectively. The simulations are conducted on two different scenarios corresponding to offline and online VMs scheduling. In the offline problem, there is a list of VMs that are waiting to be processed. In the online scenario, the VMs arrive randomly over time.

\textit{Compared Baseline Algorithms}: To provide benchmarks for our evaluations, we introduce other three algorithms:
\begin{itemize}
\item Random Strategy: it is a naive algorithm which randomly schedules the next VM on a physical server as long as the server has enough resource to host the VM.
\item Round Robin: it allocates the next VM on physical servers in turn, which is a used scheduling algorithm in Amazon EC2 \cite{Amazon}.
\item Minimum Increase Energy: it assigns the next VM to the server which minimizes the increment of energy consumption. Note that this algorithm is different from MIC, as MIC considers the increment of total cost. 
\end{itemize}

In all of the algorithms, it assumes that a new server will open if the next VM cannot be allocated on current active servers.

\textit{Parameters}: The power of a server is characterized by the three parameters of Equation.~\ref{eq:powerCalc}. We use HP ProLiant DL385 G6 with a 2 Chips/12 Cores processor. According to the server power consumption parameters, we set $P_{idle}=120W$ and $P_{busy}=258W$. The performance degradation cost is characterized by parameters: $\alpha$ and $\beta$. The parameter $\alpha$ represents the intensity of performance degradation penalty, and without loss of generality it is set to $15$. The tuning parameter $\beta$ is used for adjusting the energy consumption cost and performance degradation penalty, and also used for representing the weight between two costs.

\textit{Performance Metrics}: To evaluate the performance of proposed algorithms, we use the following four metrics:
\begin{itemize}
\item Normalized Total Energy Consumption: It shows the quality of the solution produced by the proposed algorithms in terms of total energy consumption.
\item Normalized Total Performance Degradation Penalty: This metric represents the penalties of the solution produced by the proposed algorithms in terms of performance degradation cost.
\item Normalized Total Cost: It is defined as the sum of energy consumption cost and performance degradation penalty.
\item Normalized Worst Degradation Factor: This indicates the worst performance degradation factor of VMs caused by the scheduling algorithms.
\end{itemize}

In the offline problem, we also record the total number of used physical servers and the makespan of the VMs.

\subsection{Evaluation of real workload}
We first conduct a small-scale experiment to evaluate the performance of the proposed algorithms. The performance degradation ratio is obtained from the statistics of SPECcpu 2006 benchmarks \cite{Kim13}. The properties (such as the arrival time, et.al) of the applications are drawn from a real OpenCloud Hadoop cluster trace \cite{Ren13}. The result is shown on Fig.~\ref{fig:eva_real_workload}. From the figure it can be seen that the overall cost are reduced apparently. More precisely, the MDC algorithm saves the total cost about 41\% compared with the BPV algorithm. This demonstrate the competitive advantages of the proposed algorithms against the methods which do not provide a unified consideration of both the energy consumption and the performance interference.
\begin{figure}[htbp]
\centering
\includegraphics[width=3.6in]{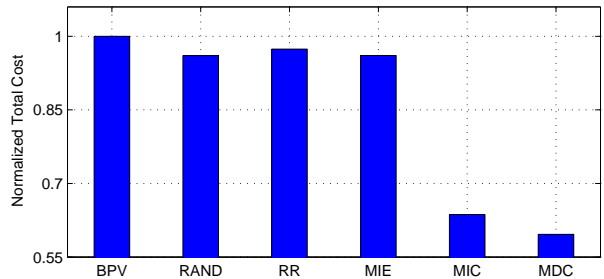}
\centering
\caption{\label{fig:eva_real_workload} Performance of algorithms on a real workload. (The result is normalized against BPV.)}
\end{figure}

\subsection{Evaluation of offline Problem Solution}
We now go to present our large-scale simulation results on algorithms proposed for offline problem. In this scenario, the computing resource required for a VM is uniformly chosen from four given types at random. We set each time slot as $1$ minute. The duration time of VMs is randomly generated from $[30,1000]$ and the arrival time of VMs is randomly generated from $[0,1000]$. The number of VMs varies from $100$ to $1000$ to emulate the low workload and heavy workload in the data center. It considers two kinds of degradation factors between VMs. One is generated from the normal distribution and the other is generated from the exponential distribution. For each simulation, we run proposed algorithms and compared algorithms using the same list of VMs, and the results of randomized algorithm is the average of running 10 times.

\textbf{Resource Violation:} In the first simulation of this evaluation, we examine the resource violation of algorithms when they do not take into account the performance degradation and do not update the duration time of VMs. In Fig.~\ref{fig:eva_resource_violation}, we show the resource violation of a server scheduled from the BPV algorithm. As we can see the computing resource surpasses the capacity during some periods (E.g., $t=653$ to $t=1469$) due to the stretch of VM execution time.
\begin{figure}[htbp]
\centering
\includegraphics[width=3.6in]{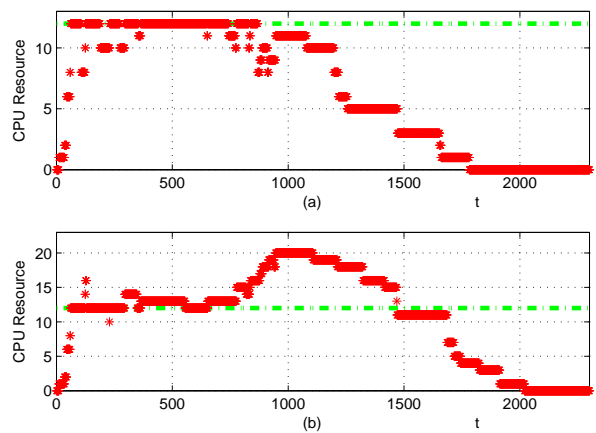}
\centering
\caption{\label{fig:eva_resource_violation}Resource violation of a server in BPV scheduling. The green dashed line means the CPU capacity of a server, i.e., 12 Cores. (a) CPU Resource used of a server from omitting the performance degradation; (b) The true CPU resource used when it considers the performance degradation stretching the execution time of VMs.}
\end{figure}

\textbf{Algorithm Performance Comparison:} We now discuss the performance of all above algorithms against the BPV algorithm with respect of four metrics, i.e., normalized energy consumption, performance degradation penalty, total cost and worst performance degradation factor. In this simulation, the degradation factor between VMs is generated from the normal distribution $N(0.0,0.2)$ $(d_{..}=0$ if $d_{..}<0)$. The results are depicted in Fig.~\ref{fig:eva_performance_algo}. As we can see, the performance of MDC algorithm is apparently better than other algorithms on reducing the performance degradation penalty. More precisely, the performance degradation penalty of MDC is $6\%$ against BPV, while MIC is $24\%$. Another observation from Fig.~\ref{fig:eva_performance_algo} is that MIC and MDC performs much better on worst performance degradation factor, which is important in data center due to SLA requirement. When considering the energy consumption, we find that MIC and MDC also have a slight reduction compared with other algorithms. This is because MIC and MDC reduce the unnecessary execution time due to the performance degradation causing by interference. As a result, MIC and MDC reduce the total cost of energy consumption and performance degradation penalty up to $62\%$ and $52\%$, respectively. It should be noted, however, the total cost of MDC is reduced $16\%$ when comparing with MIC.
\begin{figure*}[htbp]
\centering
\includegraphics[width=7.5in]{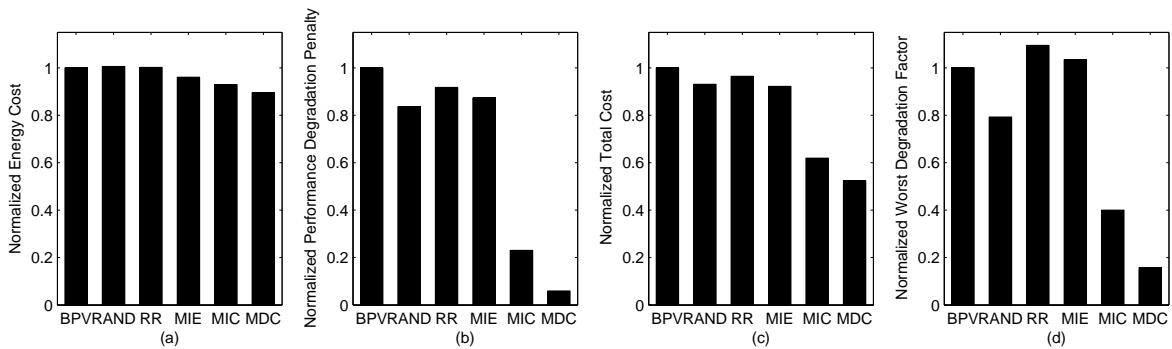}
\centering
\caption{\label{fig:eva_performance_algo}Performance of algorithms. (a) Normalized energy consumption against BPV; (b) Normalized performance degradation penalty against BPV; (c) Normalized total cost against BPV; (d) Normalized worst performance degradation factor against BPV.}
\end{figure*}

\textbf{Impact of Performance Degradation Factor:} We next investigate the impact of performance degradation factor on proposed algorithms. We keep the number of VMs fixed and execute the above algorithms with five different kinds of degradation factor between VMs. They are generated from the normal distributions $N(0.0,0.2)$, $N(0.0,0.4)$, $N(0.0,0.6)$, $N(0.0,0.8)$, $N(0.0,1.0)$, respectively $(d_{..}=0$ if $d_{..}<0)$. This corresponds to the interference between VMs is more fluctuation when the variance is changed from $0.2$ to $1.0$. In each normal distribution, we generate 5 groups of degradation factor, and we generate 3 groups of VMs list. The result is the average of the cross simulations, i.e., 15 times. Table~\ref{tb:eva_perf_degFactorN} shows how degradation factor affects the total cost. When the performance interference between VMs becomes more intensive, the total cost of BPV, RAND, RR and MIE algorithms become much larger, because they do not take into account the performance degradation penalty when making the scheduling decisions. For MIC and MDC algorithms, we can see that they lead to a slight total cost increment due to their intelligent scheduling. This rule also holds when we generate degradation factor between VMs from exponential distributions $E(100)$, $E(50)$, $E(20)$, $E(10)$, $E(5)$, $E(2)$, respectively. (The results are listed in Table~\ref{tb:eva_perf_degFactorE}.)
\begin{table}[htbp]
\centering
\caption{\label{tb:eva_perf_degFactorN}Algorithm performance in different $N(0.0, V)$ degradation factor, and the result is normalized against $V=0.2$.}
\begin{tabular}{c|c|c|c|c|c}\hline\hline
Algorithm&0.2&0.4&0.6&0.8&1.0\\\hline
BPV&1&186.83&${>}$200&${>}$200&${>}$200\\
RAND&1&1583.17&${>}$2000&${>}$2000&${>}$2000\\
RR&1&10.60&23.87&${>}$200&${>}$200\\
MIE&1&2.78&8.87&14.48&188.06\\\hline
MIC&1&1.0307&1.0333&1.0381&1.0406\\
MDC&1&1.0304&1.0480&1.0551&1.0567\\\hline\hline
\end{tabular}
\end{table}
\begin{table}[htbp]
\centering
\caption{\label{tb:eva_perf_degFactorE}Algorithm performance in different $E(\lambda)$ degradation factor, and the result is normalized against $\lambda=100$.}
\begin{tabular}{c|c|c|c|c|c|c}\hline\hline
Algo.&100&50&20&10&5&2\\\hline
BPV&1&1.0549&1.2728&2.1169&${>}$200&${>}$200\\
RAND&1&1.0447&1.2130&1.7677&230&${>}$300\\
RR&1&1.0415&1.1938&1.6070&13.5063&${>}$200\\
MIE&1&1.0437&1.2089&1.6300&3.6698&263.4512\\\hline
MIC&1&1.0471&1.1683&1.3006&1.4120&1.5360\\
MDC&1&1.0383&1.1179&1.2003&1.3072&1.4544\\\hline\hline
\end{tabular}
\end{table}

\textbf{Impact of Workload Density:} We compare the proposed algorithms on five data sets: $100$, $200$, $500$, $800$ and $1000$ VMs, and in each of them the arrival time of VMs is generated from the same range $[0,1000]$. I.e., the number of data set from $100$ to $1000$ represents the increment of workload density. The degradation factor between VMs is also generated from the normal distribution $N(0.0,0.2)$ $(d_{..}=0$ if $d_{..}<0)$. Fig.~\ref{fig:eva_perf_density} presents the results. In all simulations with different intensity, the minimum total cost is achieved by MDC due to its more global view. Moreover, the improvement margin is stable with the increment of workload density. MIC and MDC perform better in more intensive load. This is attributed to the fact that the other four algorithms would lead to more performance degradation when they do not take into account the performance interference in heavy load.
\begin{figure}[htbp]
\centering
\includegraphics[width=3.6in]{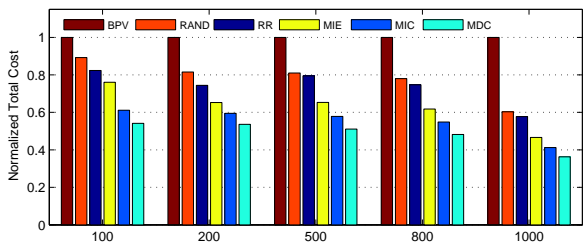}
\centering
\caption{\label{fig:eva_perf_density}Impact of different workload density.}\label{fig:graph}
\end{figure}
\begin{figure*}[htbp]
\centering
\includegraphics[width=7.5in]{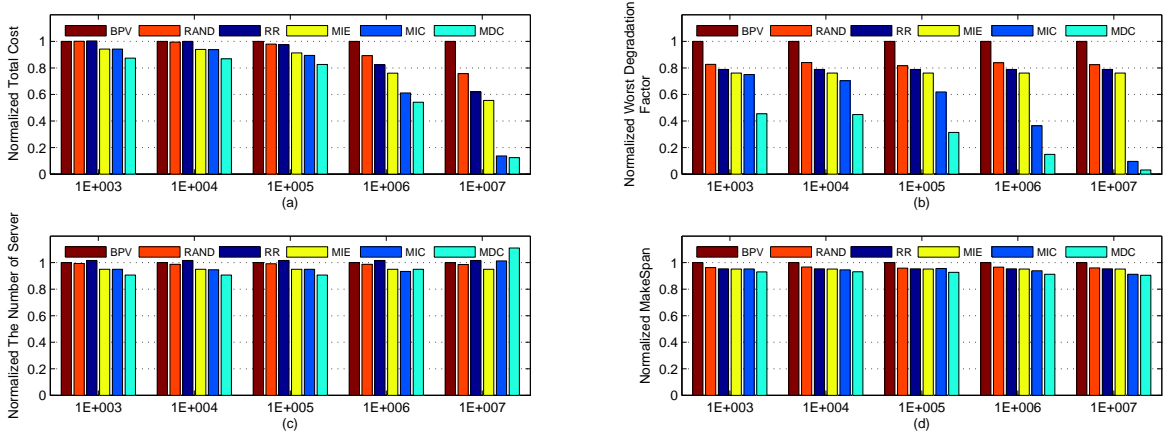}
\centering
\caption{\label{fig:eva_perf_weight}Impact of different weight $\beta$ between the operational cost and performance degradation penalty.}\label{fig:graph}
\end{figure*}

\textbf{Impact of Weight of Performance Degradation Penalty:} As mentioned in the optimization model of Section~\ref{sec:model}-C, the weight $\beta$ is some constant incorporating the normalization and the relative importance of the performance degradation penalty. Our virtual machine scheduling exploits this weight, and now we study the impact of this weight on the performance of the proposed algorithms. In this simulation, we focus on four metrics: total cost, worst performance degradation factor, the number of servers to be used and makespan. The results are depicted in Fig.~\ref{fig:eva_perf_weight}. The first observation from the Fig.~\ref{fig:eva_perf_weight} is that as the weight increases, the results of worst performance degradation factor, the number of servers to be used and makespan mainly stay the same in BPV, RAND, RR and MIE. This is because the scheduling of these four algorithms is not influenced by the weight. The results of their total cost increase only because the weight $\beta$ grows. Another remark is that the worst performance degradation factor reduces quickly in MIC and MDC when the weight surpasses a certain value (In our simulation, e.g., $\beta=10^6$). However, the number of servers to be used increases a little when we give more weight to performance degradation penalty, which leads to makespan metric reduction. Obviously, the improvement of total cost in MIC and MDC getting more with an increment of the weight.

\subsection{Evaluation of Online Algorithm}
We finally present the simulation on our online algorithms in the dynamic environment. It should be noted that the aforementioned algorithms except MDC can be transformed to the online versions, so the corresponding results are also held. We focus on evaluating the performance of \textit{Algorithm.~\ref{algo:IVP_algorithm}} which incorporates the VM batch arrival and VM reservation. In this simulation, we fix the total number of VMs and their performance degradation factors. We adopt different sizes of VMs to be revealed at each scheduling time to represent the VMs with batch arrival and reservation. There are $n_r$ VMs revealed at each scheduling time if the size is $n_r$. For example, it corresponds to one by one scheduling when the size is 1 (like OMIC), and there are 10 VMs revealed at each scheduling time when the size is 10. We run the algorithm with five randomly generated instances (1\#---5\#) and one sequentially generated instance (6\#) on each size. The result is depicted in Fig.~\ref{fig:eva_perf_online}. As we can see the performance is better with the the number of size increases, i.e., more information about VMs are revealed due to batch arrival and reservation. In addition, the total cost have a much improvement from the number of size 1 to 2. In summary, the \textit{Algorithm.~\ref{algo:IVP_algorithm}} exploits the properties of VMs scheduling in cloud data centers and obtains a better improvement of the total cost than OMIC algorithm which do not consider these properties.
\begin{figure}[htbp]
\centering
\includegraphics[width=3.6in]{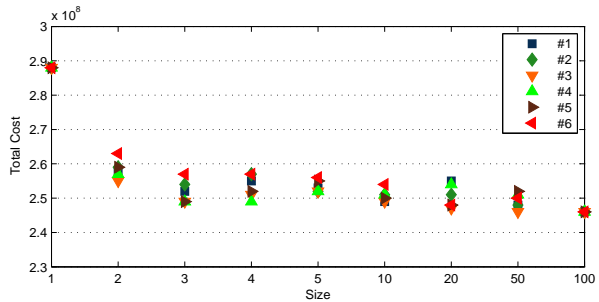}
\centering
\caption{\label{fig:eva_perf_online}Total cost of different size $n_r$ of VMs to be revealed at each scheduling time.}\label{fig:graph}
\end{figure}

\section{Conclusion}
\label{sec:conclusion}
In this paper, we present virtual machine scheduling for arbitrating between operational cost and performance interference in cloud data centers. While previous works only provide energy consumption management or performance interference optimization separately, we are among the first to build a joint model to capture the inherent tradeoff between the two contradictory objectives. We also develop efficient scheduling algorithms for both offline and online cases and improve them by exploiting some properties in clouds such as resource reservation. We evaluate the performance of the proposed algorithms by a comprehensive set of simulations. Our results confirm that a joint optimization that takes into account both VM combination and life-cycle overlapping can significantly reduce the operational cost, as well as the performance interference in cloud data centers.







%

\end{document}